%[superbeta]

\documentclass[10pt]{amsart}
 
\usepackage[utf8]{inputenc}
\usepackage[english]{babel}
 
\usepackage{url}
\usepackage{hyperref}
\usepackage{amsfonts,cancel}
\usepackage{amssymb}
\usepackage{amsthm}
\usepackage{amsopn}
\usepackage{amsmath,comment} 
\usepackage{hyperref}
\usepackage[dvipsnames]{xcolor}

\usepackage{graphicx}

\newtheorem{defi}{Definition}
\newtheorem{exm}{Example}%{Przyk\l{}ad}
\newtheorem{exr}{Exercise}

\newtheorem{lem}{Lemma}
\newtheorem{thrm}{Theorem}

\newtheorem{rem}{Remark}

\newcommand{\Aut}{\mathop{\mathrm{Aut}}\nolimits} 
 
\newcommand{\Spec}{\mathop{\mathrm{Spec}}\nolimits}  

\allowdisplaybreaks

\keywords{Euler beta function, beta super-integral, gamma function, hypergeometric function, super-integration,
% super-Radon transformation, super-hypergeometric
%function, 
complex supermanifolds, supervarieties, super-Grassmannians, line bundles}

\subjclass[2000]{Primary  33B20, 14M15, 14M30, 11S80}
% 33C05 
% ; Secondary 17B60, 17B08 (edit)

%  33C05 Classical hypergeometric functions, ${}_2F_1$
%  14M15 Grassmannians, Schubert varieties, flag manifolds 
%  14M30 Supervarieties 
%  11S80 Other analytic theory (analogues of beta and gamma functions, $p$-adic integration, etc.)

\begin{document}

	\title{Beta super-functions on super-Grassmannians 
	% {\em or} super-deformation of Euler-Gauss hypergeometric function and the Grassmannian $G_{2|2, 4|4}$
	}
	\author[]{Mee Seong Im \and Micha\l{} Zakrzewski}
	\address{Department of Mathematical Sciences, United States Military Academy, West Point, NY 10996 USA}
	\email{meeseongim@gmail.com} 
	\address{Department of Mathematics, Jan Kochanowski University, Kielce, Poland}
	\email{zakrzewski@mimuw.edu.pl}
	
	\date{\today} 
	
	\maketitle
	
	\begin{abstract}
	Israel M. Gelfand gave a geometric interpretation for general hypergeometric functions as sections of the tautological bundle over a complex Grassmannian $G_{k,n}$. 
	In particular, the beta function can be understood in terms of $G_{2,3}$. In this manuscript, we construct one of the simplest generalizations of the Euler beta function by adding arbitrary-many odd variables to the classical setting. 
	We also relate the beta super-function to the gamma and the hypergeometric function. 
	% thus giving an \color{magenta}explicit description of the gamma super function \color{black}\color{red}(how and in what way?)\color{black}. 
%	\color{red}
%	Should we describe our beta super function as a power series and systems of differential equations representations as well? 
%	\color{black}
%	 
%		\color{BurntOrange}
% Open problem, future work. 
%	One thing is a generalization to  the monodromy construction (although I would have to work some more on super-differential equations, super-connections etc.).
%	\color{black}
 
	\end{abstract} 

\maketitle 

\bibliographystyle{amsalpha}  
\setcounter{tocdepth}{0}

\section{Introduction}

 Bessel, Jacobi, Legendre, $3j$- and $6j$-symbols in quantum mechanics and many other classical special functions are special cases of hypergeometric functions. 
 In fact, many elementary and other important functions in mathematics and physics can be written in terms of hypergeometric functions, including the Euler beta and the gamma function. 
They can also be described as solitons of special differential equations, classified by singularities and exponents of certain differential equations. 
They have explicit integral and series representations, transformation and summation formulas, and other beautiful formulations relating various representations of hypergeometric functions. These functions are found and studied in combinatorics (cf. \cite{MR2343232, MR1429893}), Hilbert spaces as classical orthogonal polynomial bases (cf. \cite{koepf1996algorithms}), quantum physics in the form of harmonic analysis (cf. \cite{dattoli2010euler}), integrable systems of nonlinear differential equations as $q$-hypergeometric series using elliptic and theta-functions (cf. \cite{MR2506157, MR2044635}), and representation theory as matrix coefficients of Lie group representations (cf. \cite{MR1476496}).

The gamma function was constructed by Euler in an attempt to find an analytic continuation of the factorial function. It has a representation as an infinite integral and as a limit of a finite product, and it describes factorials in the special case with an integral domain. 
The beta function is one of the classical Euler integrals and it can define binomial coefficients after a certain adjustment of indices.  
It is a fundamental tool   % , not in the study of hypergeometric functions, but 
to systems of holonomic equations (\cite{MR902936}), as well as a fundamental special function in engineering (cf. \cite{MR3750264, MR3735044, MR3797763}), 
% probability and statistics, 
analysis (cf. \cite{MR1926172, MR951750, MR2004947}), 
number theory (cf. \cite{MR3726238, MR3773936, MR3773937, MR3773940}), 
combinatorics (cf. \cite{mansour2001gamma, MR1755751}) 
and mathematical physics (cf. \cite{MR3798008, MR3780488, MR3763251, MR3759662}). 
Furthermore, the beta function describes important properties of the strong nuclear force. That is, the nuclear interactions of elementary particles modeled using $1$-dimensional strings rather than using zero-dimensional particles are precisely described by the Euler beta function as a model for the scattering amplitude (cf. \cite{MR2006979, MR2284609}).

One of the important properties about the beta function is its close connection to the binomial coefficients that follow from Ramanujan's master theorem (cf. \cite{MR0106147} or Appendix):  
\begin{equation}\label{}
B (r, s)   =  
 \frac{\mathit{\Gamma} (r) \mathit{\Gamma} (s)}{\mathit{\Gamma} (r+s)}  
 =  
 \frac{\pi \sin (\pi(r + s))}{\sin (\pi r) \sin (\pi s)} \frac{\mathit{\Gamma} (1 - r - s)}{\mathit{\Gamma} (1 - r) \mathit{\Gamma} (1 - s)}  
 =  
 \frac{\pi \sin (\pi(r + s))}{\sin (\pi r) \sin (\pi s)} 
 { - r - s \choose - r }, \nonumber
\end{equation}
where the second equality holds only when $r,s\not\in \mathbb{Z}$, 
and the third equality holds in the sense of holomorphic functions, i.e., if the equality holds on a dense open set, then it holds everywhere, including at the singularities, due to the identity principle.

Let $G$ be a Lie group and let $P$  be a parabolic subgroup. 
In this article we introduce the Euler beta super-integral, generalizing the classical beta function:
\begin{equation}
B (r, s)   :=  
 \int_0^1 t^{r - 1} (1 - t)^{s - 1} dt , \hspace{4mm}\mbox{ where } 
 \text{Re}(r), \hspace{2mm}\text{Re}(s) >0, 
\end{equation}
where one integrates over a parameter cycle over a Grassmannian $G/P$. 

Our approach is based on the technique of Gelfand (cf. \cite{MR841131, MR888012, MR1127265, MR1121103}), who introduced a geometric interpretation for general hypergeometric functions as sections of tautological bundle $ \tau \to G_{k, n} $ over a complex Grassmannian manifold $ G_{k, n} $, i.e., see   \cite{MR1900449}. In particular, the classical beta function can be realized in terms of the geometry of the Grassmannian  
 $ G_{2, 3}$, which is explained in Section~\ref{subsection:connection-grassmannian-G23-beta}. 
 
 Although one could work in the category of complex differential geometry (and use a description of the fundamental bundles), we will work in the algebraic geometry
point of view as one can study Grassmannians in positive characteristic, which is very useful in arithmetic geometry and number theory. 

%In \cite{MR2422641}, De Bie extends classical Fourier, fractional Fourier, and Radon transforms on vector spaces to vector superspaces. He derives the fermionic part of the Fourier kernel (using anti-commuting variables) with a natural symplectic structure using Clifford analysis, whereas in \cite{MR1172996, MR1192484, MR1147140}, other authors define the Fourier kernel for the Fourier transform on superspaces with a natural orthogonal group structure. 
%De Bie also generalizes Hermite polynomials to superspaces by constructing eigenfunction basis for the Fourier transform as well as determines the fundamental solution of the super Laplace operator. 
%\color{red}
%Delete this paragraph. Postpone to our second paper. 
%Delete this paragraph. Postpone to our second paper. 
%Delete this paragraph. Postpone to our second paper. 
%\color{black}

%  \color{blue}
%  However, there is not much theory about the definition of the radon super-transform. 
%  In this paper, we construct inversion formula, Plancherel theorem, etc. 
%  \color{black}

We consider the super-Grassmannian $G_{2|1,3|2}$, which consists of $2|1$-dimensional super-subspaces in $3|2$-dimensional complex superspace (cf. see Sections~\ref{subsection:super-grassmannians}, \ref{subsection:properties-super-Grass}, and \ref{section:classical-super-beta-integral}).  We then construct a certain $1$-form $\omega$ whose coefficient is a product of powers of linear forms on the super-Grassmann, and integrate it over a $1$-cycle $\gamma$, giving us the beta super-integral $ B (s, p_1, p_2; \xi, \xi', \eta) $ (also see Theorem~\ref{thm:Euler-super-integral-relation}). 
This integral that we construct in Section~\ref{section:classical-super-beta-integral} is one of the simplest possible generalizations of the classical Euler beta function, as one can add arbitrary many odd variables to the picture.

\begin{thrm}\label{thm:Euler-super-integral}
The super-integral of $\int_{\gamma}\omega$ on the super-Grassmannian $G_{2|1,3|2}$ is 
\begin{equation}
\Phi(s,p_1,p_2; \xi, \xi', \eta)  = - (-x_{21})^{-s-1} \int_0^1 \int_{-\infty}^{\infty} u^s (1 - u + \eta \theta)^{p_1} (\xi + \xi' u + \theta)^{p_2} \theta \, d \theta \,  d u .
\end{equation}
\end{thrm}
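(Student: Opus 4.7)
The plan is to unwind the definition of $\int_\gamma \omega$ in explicit local coordinates on $G_{2|1,3|2}$, carry out the (Berezin) super-integration, and read off the stated double integral.

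First, I would pick an affine chart of $G_{2|1,3|2}$ in which a generic $(2|1)$-plane is represented by a block super-matrix whose "left" $(2|1)\times(2|1)$ block is normalized to the identity, so that the remaining entries become the even local coordinates (in particular $x_{11}$ and $x_{21}$) together with the odd coordinates $\xi,\xi',\eta,\theta$ used throughout Section~\ref{section:classical-super-beta-integral}. In such a chart the $1$-form $\omega$ is, by its very construction, a product of powers of linear forms in these coordinates times the standard super-volume element on the chosen cycle; the three linear factors correspond respectively to the factors $u^{s}$, $(1-u+\eta\theta)^{p_{1}}$ and $(\xi+\xi' u+\theta)^{p_{2}}$ that appear in the statement.

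Second, I would parametrize $\gamma$: its even part sweeps out the interval $u\in[0,1]$ after the projective rescaling $u = x_{11}/(-x_{21})$, while its odd part is the Berezin $\theta$-integration, with $\int_{-\infty}^{\infty}$ understood in the standard super-analytic sense $\int d\theta =0$, $\int \theta\,d\theta = 1$. The rescaling turns $x_{11}^{s}\,dx_{11}$ into $(-x_{21})^{s+1}u^{s}\,du$, which, when the original $x_{21}$-dependence of $\omega$ is factored out, produces exactly the overall prefactor $-(-x_{21})^{-s-1}$.

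Third, I would substitute these parametrizations into $\omega$, collect the three linear factors, and check that the sign arising from anticommuting $d\theta$ past $du$ combines with the sign of the Jacobian of the rescaling to give the minus sign in front. What remains inside the integral sign is literally
\begin{equation*}
u^{s}\bigl(1-u+\eta\theta\bigr)^{p_{1}}\bigl(\xi+\xi' u+\theta\bigr)^{p_{2}}\theta,
\end{equation*}
so the identification with $\Phi(s,p_1,p_2;\xi,\xi',\eta)$ follows.

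The main technical obstacle is bookkeeping rather than conceptual: one must choose the affine chart and the ordering of the coordinates in a way consistent with Gelfand's tautological-bundle formalism and then track carefully (i) the Berezin sign conventions for $d\theta$, (ii) the orientation of the cycle $\gamma$, and (iii) the normalization chosen for the linear forms defining $\omega$. Once these conventions are fixed, the substitution is a direct algebraic computation and no further analytic input is required.
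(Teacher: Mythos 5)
Your overall strategy is the one the paper follows: normalize a maximal minor of the $(3|2)$-frame to the identity, contract the remaining columns with the homogeneous coordinates of $\mathbb{CP}^{1|1}$ to obtain the linear forms, integrate $\omega = t^{s}\,l_1^{p_1} l_2^{p_2}\,\theta\,d\theta\,dt$ over a segment times the odd line, and rescale to land on $[0,1]$. Two points in your sketch, however, do not survive being made precise. First, you conflate the chart coordinates of the super-Grassmannian ($x_{11}$, $x_{21}$, $\xi$, $\xi'$, $\eta$, which are \emph{parameters} of the integrand) with the coordinates of the space being integrated over: the cycle $\gamma$ lives in $\mathbb{C}^{1|1}=\Spec(\mathbb{C}[t;\theta])$, so the even integration variable is the affine coordinate $t$ of $\mathbb{CP}^{1|1}$, not $x_{11}$; in the chart the paper uses, $x_{11}$ has already been normalized to $1$ (absorbed by the torus action, so that $l_1 = 1 + x_{21}t + \eta\theta$), and it cannot serve as the variable of integration. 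Second, and as a consequence, your rescaling is written upside-down: $u = x_{11}/(-x_{21})$, i.e.\ $x_{11} = (-x_{21})u$, turns $x_{11}^{s}\,dx_{11}$ into $(-x_{21})^{s+1}u^{s}\,du$, which is the \emph{reciprocal} of the required power, and the appeal to ``factoring out the original $x_{21}$-dependence of $\omega$'' does not repair this. The substitution must act on the integration variable: $u=-x_{21}t$ maps the segment $[-1/x_{21},0]$ to $[0,1]$ and gives $t^{s}\,dt = -(-x_{21})^{-s-1}u^{s}\,du$, which produces the stated prefactor directly.

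A smaller omission: you assert that the even part of $\gamma$ sweeps out an interval but do not say which one or why. The paper identifies it as the segment bounded by the zero loci of the reduced even forms $l_3=t$ and $l_1 = 1+x_{21}t$, forced by single-valuedness of $t^{s}$ and $(1+x_{21}t)^{p_1}$; this is what fixes the limits $-1/x_{21}$ and $0$, and hence, after the rescaling above, both the interval $[0,1]$ and the overall sign. With the chart coordinates and the integration variable kept separate, and the cycle pinned down in this way, the rest of your outline (Berezin conventions $\int d\theta=0$, $\int\theta\,d\theta=1$, and the bookkeeping of signs) goes through exactly as in the paper.
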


One may extend Theorem~\ref{thm:Euler-super-integral} to the super-monodromy setting using super-differential equations and super-connections. This construction will be discussed in our sequel paper. 

%		\color{BurntOrange}
% Open problem, future work. 
%	One thing is a generalization to  the monodromy construction (although I would have to work some more on super-differential equations, super-connections etc.).
%	\color{black}

\begin{thrm}\label{thm:Euler-super-integral-relation}
The beta super-integral on the super-Grassmannian $G_{2|1,3|2}$ is 
\begin{equation}
\Phi(s,p_1,p_2; \xi, \xi', \eta)  = - (-x_{21})^{-s-1}  B (s, p_1, p_2; \xi, \xi', \eta). 
\end{equation}
\end{thrm}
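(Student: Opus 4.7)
The plan is to start from the integral representation of $\Phi$ supplied by Theorem~\ref{thm:Euler-super-integral} and directly identify the inner double integral with $B(s,p_1,p_2;\xi,\xi',\eta)$ as constructed in Section~\ref{section:classical-super-beta-integral}. Because both sides are by construction obtained from the same super $1$-form $\omega$ on $G_{2|1,3|2}$ integrated over the same $1$-cycle $\gamma$, the theorem should follow from unwinding the two definitions to a common closed-form expression; the prefactor $-(-x_{21})^{-s-1}$ is then inherited from Theorem~\ref{thm:Euler-super-integral} and the equality is immediate.

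The first step is to recall the explicit definition of $B(s,p_1,p_2;\xi,\xi',\eta)$ from Section~\ref{section:classical-super-beta-integral}, written as a Berezin-type integral over the $1|1$-dimensional cycle $\gamma$ whose integrand is a product of powers of linear super-forms on $G_{2|1,3|2}$. In parallel, one takes the integral appearing in Theorem~\ref{thm:Euler-super-integral} and prepares it for Berezin integration in the odd variable $\theta$ by super-Taylor expanding the two nilpotent super-factors. The key observation is that these expansions terminate at low order because of the Grassmann relations: writing $(1-u+\eta\theta)^{p_1}=(1-u)^{p_1}+p_1(1-u)^{p_1-1}\eta\theta$ since $(\eta\theta)^2=0$, and expanding $(\xi+\xi'u+\theta)^{p_2}$ as a graded polynomial in the three pairwise anticommuting odd generators $\xi,\xi',\theta$ only up to the order needed for a non-vanishing Berezin integral.

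The second step is to perform the Berezin integration $\int\theta\,d\theta=1$, which isolates precisely the $\theta$-linear part of the expanded integrand. What remains is a finite sum over terms of the form $\int_{0}^{1} u^{s+a}(1-u)^{p_1+b}\,du$, weighted by monomials in $\xi,\xi',\eta$; these are classical Euler beta integrals $B(s+a+1,p_1+b+1)$, and the combination is precisely the definition of the super-beta quantity $B(s,p_1,p_2;\xi,\xi',\eta)$ recalled in Step~1. Substituting this identification into the formula of Theorem~\ref{thm:Euler-super-integral} yields the claimed identity.

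The main obstacle is neither analytic nor geometric but combinatorial: the careful bookkeeping of signs and of the order in which odd factors are brought past one another under super-Taylor expansion, multiplication, and Berezin integration. Keeping the conventions consistent with the orientation of $\gamma$ and with the ordering chosen in the definition of $B(s,p_1,p_2;\xi,\xi',\eta)$ in Section~\ref{section:classical-super-beta-integral} is where the proof requires the most care; once these signs are matched, the equality reduces to a term-by-term comparison.
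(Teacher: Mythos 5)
Your proof is correct and takes essentially the same route as the paper: the identity is definitional, with $B(s,p_1,p_2;\xi,\xi',\eta)$ identified as the inner double integral appearing in Theorem~\ref{thm:Euler-super-integral} and the prefactor $-(-x_{21})^{-s-1}$ simply carried over. The Berezin expansion and reduction to classical Euler beta integrals in your second and third steps are not needed for this statement --- the paper defers that computation to the proof of Theorem~\ref{thrm:super-beta-gamma-function} --- though including it does no harm.
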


We use the axiom of Berezin integration that $\int_{-\infty}^{\infty}\theta \: d\theta =1$ to obtain:  

\begin{lem}\label{lem:Euler-super-integral-classical-reduction}
If $ p_2 = 0 $ and 
$ \eta = 0 $ on $G_{2|1,3|2}$, 
% $ \xi, \xi', \eta = 0 $, 
we have 
\begin{equation}\label{eqn:reduction-classical-Euler-beta}
B (s, p_1, 0; \xi, \xi', 0)   =  
 \int_0^1 \int_{-\infty}^{\infty}  u^s (1 - u)^{p_1}   \theta \: d\theta d u 
 = 
  \int_0^1  u^s (1 - u)^{p_1}  d u 
    =   B (s + 1, p_1 + 1) ,
\end{equation}
with the right-hand side $ B $ being the classical Euler beta function.
\end{lem}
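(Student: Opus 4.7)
The plan is to start from the explicit integral representation of $B(s, p_1, p_2; \xi, \xi', \eta)$ that is forced by combining Theorem~\ref{thm:Euler-super-integral} and Theorem~\ref{thm:Euler-super-integral-relation}. Equating the two expressions for $\Phi$ and cancelling the common prefactor $-(-x_{21})^{-s-1}$ yields
\begin{equation*}
B(s, p_1, p_2; \xi, \xi', \eta) = \int_0^1 \int_{-\infty}^{\infty} u^s (1 - u + \eta \theta)^{p_1} (\xi + \xi' u + \theta)^{p_2} \theta \, d\theta \, du,
\end{equation*}
which is the starting point of the argument; no extra data about the geometry of $G_{2|1,3|2}$ needs to be invoked at this stage.

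Next I would specialize to $p_2 = 0$ and $\eta = 0$. The first substitution collapses the factor $(\xi + \xi' u + \theta)^{p_2}$ to $1$, while the second reduces $(1 - u + \eta\theta)^{p_1}$ to the purely even expression $(1 - u)^{p_1}$. All dependence on the odd variable $\theta$ now sits in the single explicit factor $\theta$ multiplying the integrand, giving the first equality in \eqref{eqn:reduction-classical-Euler-beta}.

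I would then separate the $u$-integration from the $\theta$-integration by Fubini (the even factor is $\theta$-independent) and apply the Berezin axiom $\int_{-\infty}^{\infty}\theta\,d\theta = 1$ to the inner odd integral. This produces the purely classical integral $\int_0^1 u^s (1-u)^{p_1}\,du$, which matches the middle equality of \eqref{eqn:reduction-classical-Euler-beta}. A direct comparison with the definition $B(r, s) = \int_0^1 t^{r-1}(1-t)^{s-1}\,dt$, reading off $r = s+1$ and $s \mapsto p_1 + 1$, identifies the result as $B(s+1, p_1+1)$.

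There is no genuine obstacle: once the integral formula for $B(s,p_1,p_2;\xi,\xi',\eta)$ is in hand and the Berezin axiom is accepted, the argument is purely mechanical substitution followed by the elimination of the odd variable. The only point worth flagging in the write-up is that the Grassmannian prefactor $-(-x_{21})^{-s-1}$ occurs on both sides of Theorems~\ref{thm:Euler-super-integral} and \ref{thm:Euler-super-integral-relation}, so it drops out of the identity being specialized here and plays no role in the reduction to the classical Euler beta function.
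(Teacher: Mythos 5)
Your proposal is correct and follows essentially the same route as the paper, which likewise obtains the lemma by specializing $p_2 = 0$ and $\eta = 0$ in the integral representation of $B(s,p_1,p_2;\xi,\xi',\eta)$ from Theorems~\ref{thm:Euler-super-integral} and \ref{thm:Euler-super-integral-relation}, applying the Berezin axiom $\int_{-\infty}^{\infty}\theta\,d\theta = 1$, and matching the resulting classical integral with $B(s+1,p_1+1)$. The paper presents this computation inline in the lemma's displayed equation rather than in a separate proof, but the content is identical to yours.
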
 

\begin{thrm}\label{thrm:super-beta-gamma-function}
The beta super-integral on the super-Grassmannian $G_{2|1,3|2}$ also has the representation 
\begin{equation}
B (s, p_1, p_2; \xi, \xi', \eta) 
= 
\frac{\Gamma(s+1)\Gamma(p_1+1)}{\Gamma(p_1+s+2)}
\xi^{p_2}  \, 
{}_2F_1 \left( \begin{matrix} -p_2 , s+1 \\ p_1 + s+2 \end{matrix} \, \Big| - \frac{\xi'}{\xi}  \right). 
\end{equation}
\end{thrm}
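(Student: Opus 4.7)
The plan is to combine Theorems~\ref{thm:Euler-super-integral} and \ref{thm:Euler-super-integral-relation}, which together identify
\[
B(s,p_1,p_2;\xi,\xi',\eta) \;=\; \int_0^1\!\!\int_{-\infty}^{\infty} u^{s}(1-u+\eta\theta)^{p_1}(\xi+\xi' u+\theta)^{p_2}\,\theta\,d\theta\,du,
\]
and then to carry out the Berezin integration explicitly before expanding the resulting classical integral into a hypergeometric series. The overall strategy is thus: collapse the odd variables first, which reduces the problem to a one-variable Mellin-type integral, and then recognize that integral as a $\Gamma$-factor times a ${}_2F_1$.

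First, I would expand the two ``super-binomials'' in the integrand using $\theta^{2}=0$ and the fact that $\eta\theta$ is an even nilpotent of square zero. Writing $a=1-u$ and $b=\xi+\xi'u$, this gives
\[
(a+\eta\theta)^{p_1}=a^{p_1}+p_1 a^{p_1-1}\eta\theta,\qquad (b+\theta)^{p_2}=b^{p_2}+p_2 b^{p_2-1}\theta.
\]
Multiplying these out and then multiplying by the trailing $\theta$, every term except $a^{p_1}b^{p_2}\theta$ contains a factor of $\theta^2$ (using $\theta\eta\theta=-\eta\theta^{2}=0$), so the axiom $\int\theta\,d\theta=1$ collapses the Berezin integral to
\[
B(s,p_1,p_2;\xi,\xi',\eta)=\int_0^1 u^{s}(1-u)^{p_1}(\xi+\xi' u)^{p_2}\,du,
\]
which, as a sanity check, is independent of $\eta$, matching the right-hand side of the statement.

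Next, I would factor out $\xi^{p_2}$ and use the generalized binomial series
\[
(\xi+\xi' u)^{p_2}=\xi^{p_2}\sum_{k\ge0}\binom{p_2}{k}\left(\tfrac{\xi'}{\xi}\right)^{\!k}u^{k},
\]
interchange sum and integral, and apply the classical Euler identity
\[
\int_0^1 u^{s+k}(1-u)^{p_1}\,du=B(s+k+1,p_1+1)=\frac{\Gamma(s+k+1)\Gamma(p_1+1)}{\Gamma(p_1+s+k+2)}.
\]
Using $\Gamma(s+k+1)=\Gamma(s+1)(s+1)_k$, $\Gamma(p_1+s+k+2)=\Gamma(p_1+s+2)(p_1+s+2)_k$, and $\binom{p_2}{k}=(-1)^k(-p_2)_k/k!$, the series collapses to
\[
\xi^{p_2}\,\frac{\Gamma(s+1)\Gamma(p_1+1)}{\Gamma(p_1+s+2)}\sum_{k\ge0}\frac{(-p_2)_k\,(s+1)_k}{(p_1+s+2)_k}\,\frac{(-\xi'/\xi)^k}{k!},
\]
which is the claimed ${}_2F_1$ expression.

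The only real obstacle is the first step: carefully justifying the super-binomial expansion and tracking the signs from anticommuting $\theta,\eta$ so that one can be sure only the single summand $u^{s}(1-u)^{p_1}(\xi+\xi' u)^{p_2}\theta$ survives the Berezin integration. After that, the remaining argument is a standard manipulation and a convergence remark: the termwise integration is legitimate when $\mathrm{Re}(s),\mathrm{Re}(p_1)>-1$ and $|\xi'/\xi|<1$, and the resulting identity then extends to the full stated parameter range by the identity principle, since both sides are meromorphic in $s,p_1,p_2$ and $\xi'/\xi$.
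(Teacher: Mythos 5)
Your proposal is correct and follows essentially the same route as the paper: collapse the Berezin integral using $\theta^2=0$ and $(\eta\theta)^2=0$ to reduce to $\int_0^1 u^s(1-u)^{p_1}(\xi+\xi'u)^{p_2}\,du$, then identify this with the Euler integral representation of ${}_2F_1$. The only difference is cosmetic — the paper cites the Euler integral representation from the literature, while you rederive it by termwise binomial expansion and Pochhammer bookkeeping, which also supplies the convergence remark the paper leaves implicit.
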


\subsection{Summary of the sections}\label{subsection:summary} 
 
Section~\ref{section:background} recalls classical topics in the literature. Section~\ref{subsection:grassmannians} gives constructions of Grassmannians using quotients of algebraic groups, and we discuss integration on them. In Section~\ref{subsection:connection-grassmannian-G23-beta}, we give a connection between the geometry of a certain Grassmannian and the Euler beta integral. In Section~\ref{subsection:properties-beta-function-classical}, properties of the Euler beta function are given, including properties of the integral of a particular differential form over a cycle.  We define and construct vector superspaces in Section~\ref{subsection:superspace}, discussing superpoints and the Grassmann algebra. 
Super-Grassmannians are constructed and discussed in Section~\ref{subsection:super-grassmannians}, and in Section~\ref{subsection:properties-super-Grass}, we give basic properties of super-Grassmannians. Finally, in Section~\ref{subsection:change-of-variables}, we define change of variables in the super setting. 

In Section~\ref{section:classical-super-beta-integral}, we give the general construction of the beta super-integral (also see Section~\ref{subsection:general-beta-integral}), including a thorough exploration of the special case of the super-Grassmannian $G_{2|1,3|2}$ in Section~\ref{subsection:special-case}, thus proving Theorem~\ref{thm:Euler-super-integral} and Theorem~\ref{thm:Euler-super-integral-relation}. 
In Section~\ref{section:super-beta-integral-odd-variables}, we expand the beta super-integral with respect to the odd variables, thus proving Theorem~\ref{thrm:super-beta-gamma-function}. 
We conclude with an Appendix giving a more thorough discussion of Ramanujan's master theorem.

\subsection{Acknowledgement}\label{section:acknowledgement}

The first author is partially supported by National Research Council in Washington D.C. M.S.I. thanks Jan Kochanowski University in Kielce, Poland for their hospitality, where the initial draft of this paper was produced.  M.Z. would like to thank M.S.I. for introducing him to super-geometry and super-analysis. 
		
\section{Background}\label{section:background}

We begin with a discussion on Grassmannians. 
\subsection{Grassmannians}\label{subsection:grassmannians}
Let $ G_{k, n} $ denote the complex Grassmannian variety of $ k $-dimensional linear subspaces in an $ n $-dimensional  vector space. The atlas of $ G_{k, n} $ consist of maps $ x : G_{k, n} \to \mathbb{C}^{k (n - k) } $ constructed by the mappings from $ G_{k, n} $ to the space of matrices $ M_{k \times n} (\mathbb{C}) $, with rank equal to $ k $. They form open subsets in the space of matrices $ M_{k \times n} (\mathbb{C}) $ 
		as we exclude closed sets  %be\k{e}d\k{a}
defined as the zero locus $ \det A_{i_1 i_2 \ldots i_k} = 0 $, where $ A_{i_1 i_2 \ldots i_k} $ is a minor whose columns have indices $ i_1 , i_2 , \ldots , i_k $.  
Since $G_{k,n}$ are isomorphic up to a change-of-basis, 
we will denote the open locus $\det A_{i_1 i_2 \ldots i_k} \not= 0 $ in $ M_{k \times n} (\mathbb{C})$  by 
 $ S_{k, n} $. 
% \color{red}
% Should this notation $ S_{k, n} $ be changed to $S_{I_k,n}$, where $I_k =\{ i_1 ,i_2 ,\ldots ,i_k\}$? 
% \color{black}  
% \color{BurntOrange}
% No, I don't think so - the Stieffel manifolds are all isomorphic, so the notation is ok.
% \color{black} 
The number of such (distinct) maps is equal to the number of choices of $ k $ columns from $ n $ columns, i.e., $ { n \choose k } $. 
 The choice of these columns associates a natural (left) action of $ GL_k ( \mathbb{C})$ on $ S_{k, n} $ defined by taking  the invertible minor to the identity matrix. 
Thus quotienting out $S_{k,n}$ by $ GL_k ( \mathbb{C})$, we obtain $ G_{k, n} = GL_k ( \mathbb{C}) \backslash S_{k, n} $. 
We also have the torus $T_k := GL_1 ( \mathbb{C})^k \cong GL_1(\mathbb{C})^{\oplus k}\oplus 1^{\oplus (n-k)}$  acting on $S_{k, n}$ 
diagonally from the right.  
%\color{red}
%(check this; how does $T_k$ act from the right when the matrix $M_{k\times n}(\mathbb{C})$ has $n$ columns?) \color{black} 
Let us denote the possibly singular space of parameters  
 by $ P_{k, n}  = GL_k (\mathbb{C}) \backslash S_{k, n} / T_k $.  
%\begin{equation}
% , \nonumber
%\end{equation}
 
For $1\leq i\leq k$, 
let $ l_i : = x_{1i} + x_{2i} u_2 + \ldots + x_{ki} u_k$,  
and let $ \phi_x (u) : = l_1^{s_1} l_2^{s_2} \cdots l_k^{s_k} $. 
We define hyperplanes as $ L_j = \{ l_j = 0 \}$, and 
divisors as $ D_x  = \bigcup_j L_j $.  
 
Let $ s \in U \subset \mathbb{C}^k $ denote a complex parameter. 
 Define
\begin{equation}\label{AomotoGelf}
\Phi (x, s)  :=  
 \int_{\gamma_x} \phi_x (u) \, du, 
\end{equation}
where  
$\gamma_x$ is a cycle on $ \mathbb{CP}^{k-1} \backslash (D_x \cap \mathbb{RP}^{k-1} $). 
Integrate over the cycle $ \gamma_x $, and homological with the cycle is a connected component of $ \mathbb{CP}^{k-1} \backslash (D_x \cap \mathbb{RP}^{k-1} $),  
%\color{red}
%(MSI: check $\mathbb{RP}^k$ again.)
%\color{black}
lying in-between (and thus bounded by) the hyperplanes  $L_j$. Thus $\gamma_x$ is also bounded by $D_x$. 

The function $ \Phi $ is well-defined and it has an analytic continuation (along any curve) to the entire $ S_{k, n} $
by the identity principle.\footnote{The set of parameters is from $S_{k,n}$, and is invariant under the $GL_k(\mathbb{C})$-action, giving us $G_{k,n}$. Thus, we just integrate over $\mathbb{CP}^{k-1}$. 
%\color{red}
%Last sentence doesn't make sense. 
%\color{black}
} 
Furthermore, $ \Phi $ is \emph{invariant} with respect to the action of the special linear group $ SL_k ( \mathbb{C}) \subset GL_k (\mathbb{C}) $, which is evident in the following derivation. Define the left action by $GL_k (\mathbb{C})$ on $u$ by $g . u$, where $u=(u_1,\ldots, u_k)$.  
First, if $ g \in GL_k (\mathbb{C}) $, then
\begin{align}
\Phi (g. x, s) & = \,
 \int_{\gamma_{g . x}} \phi_{g . x} (u) \, du \, = \,
 \int_{\gamma_{g . x}} \phi_{x} (g^{-1} u) \, du \nonumber \\ & = \,
 (\det g)^{-1}  \int_{\gamma_{g . x}} \phi_{x} (v) \, dv . 
\end{align}
Since $ GL_k (\mathbb{C}) $ is a connected complex manifold, there exists a curve, say $ g (t) $, such that $ g (0) = I $ and $ g (1) = g $. It follows that  $ \gamma_{g . x} $ and $ \gamma_{x} $ are \emph{homologous}. 
But if $ g \in SL_k (\mathbb{C}) $, then %...
\begin{equation}
\Phi (g. x, s) \, = \,
 \Phi (x, s) . 
\end{equation}
But $ GL_1 (\mathbb{C}) \to GL_k (\mathbb{C}) \to SL_k (\mathbb{C}) \overset{\det}{\longrightarrow} 1 $ is a \emph{principal bundle}, %, z w\l{}\'oknem $ GL (1, \mathbb{C}) $. %$ GL (k, \mathbb{C}) \simeq \mathbb{C} \times SL (k, \mathbb{C}) $, (?)
 and it follows that $ \Phi $ is defined on the quotient space $ SL_k (\mathbb{C}) \backslash S_{k, n} $, which is a line bundle over $ G_{k, n} $.

Let $ \tau \to G_{k, n} $ be the \emph{tautological bundle}, parametrized by all pairs of the form $ (x, V_x)$, where $x \in G_{k, n}$ and $ V_x \subset \mathbb{C}^n $ is a linear subspace corresponding to $ x \in G_{k, n} $. The rank of $ \tau $ is equal to $ k $, which implies that $ \bigwedge^k \tau $ is a line bundle. The action of $ SL_k (\mathbb{C}) $ on $ \tau $  leads to the natural action on $ \bigwedge^k \tau $ on each section of $ \tau $, invariant with respect to this action. 
This also produces an invariant section of $ \bigwedge^k \tau $. 
But since such section is precisely $ \Phi $, 
we will view $ \Phi $ as a \emph{section} of the line bundle $ \bigwedge^k \tau $.

\subsection{The geometry of the Grassmannian $G_{2,3}$ and the Euler beta function}\label{subsection:connection-grassmannian-G23-beta}

Let $ V_2 \subset \mathbb{C}^3 $ be a $2$-dimensional vector space. Then $ V_2 $ can be described using a basis of two linearly independent vectors from $ \mathbb{C}^3 $, which can be represented by the matrix
\begin{equation}\label{}
A  : = 
 \left( \begin{array}{ccc} x_{11} & x_{12} & x_{13} \\ x_{21} & x_{22} & x_{23} \end{array}
 \right)  % \nonumber
\end{equation}
of maximal rank (equal to two). Because the rank is maximal, we can find an invertible minor, say
\begin{equation}\label{}
g \, : = \,
 \left( \begin{array}{cc} x_{11} & x_{12} \\ x_{21} & x_{22}
\end{array}
 \right) , \quad g \in GL_2 (\mathbb{C}) , 
\end{equation}
such that  
\begin{equation}\label{}
g^{-1} A  : =  
 \left( \begin{array}{ccc} 1 & 0 & y_1 \\ 0 & 1 & y_2 \end{array}
 \right) , % \nonumber
\end{equation}
with $ y_1, y_2 $ given explicitly in terms of $ x_{ij} $.

%The map
We define an affine map $ \phi (g^{-1} A) = (y_1, y_2) \in \mathbb{C}^2 $ on the \emph{Grassmannian variety} $ G_{2, 3} $, which parametrizes $ 2 $-dimensional vector subspaces in $ \mathbb{C}^3 $.

\begin{rem}
From the other two minors, we can construct other maps that cover the entire $ G_{2, 3} $, and transitions between these maps provide the gluing conditions.  That is, for 
\[ 
g' = 
\begin{pmatrix}
x_{11} & x_{13} \\ 
x_{21} & x_{23} \\ 
\end{pmatrix}\in GL_2(\mathbb{C}),  
\hspace{4mm} 
(g')^{-1}A = 
\begin{pmatrix}
1 & y_1' & 0 \\ 
0 & y_2' & 1 \\ 
\end{pmatrix}
\] 
while for 
\[
g''= 
\begin{pmatrix}
x_{12} & x_{13} \\ 
x_{22} & x_{23} \\ 
\end{pmatrix}\in GL_2(\mathbb{C}), 
\hspace{4mm} 
(g'')^{-1}A = 
\begin{pmatrix} 
y_1'' & 1 & 0  \\
 y_2'' & 0 & 1 \\  
\end{pmatrix}. 
\] 
\end{rem}

Now, let $ [w_1 : w_2] \in \mathbb{CP}^1 $. Then there are two maps of the form $ \phi :U\subseteq \mathbb{CP}^1 \to \mathbb{C} $, covering $ \mathbb{CP}^1 $. If we choose one of them, say $ w_1 \neq 0 $, then $ [w_1 : w_2] = [1 : t] $, with $ t \in \mathbb{C} $. From this data, we may construct three affine forms, i.e., 
% using $g^{-1}A$, $(g')^{-1}A$, and $(g'')^{-1}A$. 
the affine forms corresponding to $g^{-1}A$ are: 
\begin{equation}\label{}
%g^{-1} A \, : = \,
 \left( \begin{array}{c} 1 \\ t \end{array}
 \right)^T
 \left( \begin{array}{ccc} 1 & 0 & y_1 \\ 0 & 1 & y_2 \end{array}
 \right)   =  
 \left( \begin{array}{ccc} 1 & t & y_1 + y_2 t \end{array}
 \right) . % \nonumber
\end{equation}
From these forms, we construct a function
\begin{equation}\label{eqn:reduction-three-products}
\phi (t)   =  
 1 \cdot t^{\alpha} \cdot (y_1 + y_2 t)^{\beta} , % \nonumber
\end{equation}
as well as the differential form $ \omega  = \phi (t) \, dt $.
%\begin{equation}\label{}
%\omega \, : = \, \phi (t) \, dt \, = \,
% 1 \cdot t^{\alpha} \cdot (y_1 + y_2 t)^{\beta} dt . % \nonumber
%\end{equation}

To find the natural domain of integration for $ \omega $, note that the function $ \phi $ is multivalued and has singularities precisely when $ t = 0 $ or $ y_1 + y_2 t = 0 $. From the second equation, we deduce that the equality holds precisely when $ t = - y_1 / y_2 $. So $ \phi $, and thus also $ \omega $, is well-defined on the (universal cover of the) Riemann surface $ \mathbb{CP}^1 \backslash \{ 0, -y_1/y_2 , \infty \} $. Connecting these two points with a line gives the required %cycle
 chain of integration, say $ \gamma $.\footnote{We say $ \gamma $ is a (twisted) \emph{cycle} in the \emph{twisted homology}. } To find the other two chains of integration, we consider the other map $U\subseteq \mathbb{CP}^1 \to \mathbb{C} $ that covers $ \infty $, and this point is also singular. We thus obtain three singular points  $ 0 , - y_1/ y_2 , \infty $, and connecting them pairwise gives the three %cycles
 chains of integration.

\begin{defi}
Let $\phi$ be the map defined in \eqref{eqn:reduction-three-products}. 
Define the function 
\begin{equation}\label{eqn:one-cycle-one-form}
\Phi (\alpha, \beta) \, := \,
 \int_{\gamma} \omega . % \nonumber
\end{equation} 
\end{defi}

\begin{defi}\label{defn:beta-integral}
For $\textrm{Re}(r), \textrm{Re}(s)>0$, 
the {\em Euler beta integral} is 
\begin{equation}\label{}
B (r, s) \, := \,
 \int_0^1 t^{r - 1} (1 - t)^{s - 1} \, dt . % \nonumber
\end{equation}
\end{defi}

\begin{rem}\label{rem:Euler-beta-classical-monodromy}
The Euler beta integral in Definition~\ref{defn:beta-integral} may be extended and defined for any complex numbers $r$ and $s$  using analytic continuation of differentiation by parts under the integral, using a technique known as monodromy. 
\end{rem}

% Future work 
% 
% 

%\color{red}
%MSI: could Definition~\ref{defn:beta-integral} (above) be extended and be defined for those complex numbers whose real part $\textrm{Re}(r), \textrm{Re}(s)\leq 0$? Would monodromy action come into play here? 
%\color{black}
 
%\color{BurntOrange}
%The continuation can be obtained by differentation by parts under the integral - this is the simplest way. Maybe we could postpone the study of monodromy (however, in the classical setting I'm able to do it)?
%\color{black}

\subsection{Properties of the beta function}\label{subsection:properties-beta-function-classical}

%		From the previous paragraph, it follows that the %(multi-valued) function $ \phi $ is well defined. %[!]
%  In this part we will %proove the
We will describe several properties of $ \Phi $ in \eqref{eqn:one-cycle-one-form}, including its connection to the Euler beta integral. 

\begin{thrm}\label{phi=beta}
Up to the constant $(- 1)^{\alpha+1} y_1^{\beta + \alpha + 1} y_2^{-\alpha - 1}$, 
the function 
$ \Phi (\alpha, \beta) $ is equal to 
%\begin{equation}\label{}
%\Phi (\alpha, \beta) \, := \,
% \int_{\gamma} \omega . % \nonumber
%\end{equation}
$ B (\alpha + 1, \beta + 1) $.
\end{thrm}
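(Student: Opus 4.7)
The plan is to evaluate the integral $\Phi(\alpha,\beta)=\int_\gamma \omega$ directly by an affine change of variables that converts the chain $\gamma$ into the standard interval $[0,1]$, recognize what remains as the Euler beta integral, and collect the prefactor that comes along.

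First I would fix the chain $\gamma$ to be the (twisted) one-chain joining the two finite singularities $0$ and $-y_1/y_2$ of $\phi(t)=t^{\alpha}(y_1+y_2 t)^{\beta}$, as singled out in the discussion preceding \eqref{eqn:one-cycle-one-form}. Then I would make the substitution $t=-\tfrac{y_1}{y_2}\,u$, so that $dt=-\tfrac{y_1}{y_2}\,du$, the endpoints become $u=0$ and $u=1$, and the linear factor transforms as $y_1+y_2 t=y_1(1-u)$.

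Second, I would substitute into $\omega=t^{\alpha}(y_1+y_2t)^{\beta}\,dt$ and factor out every quantity that does not depend on $u$:
\begin{equation}
\omega \;=\; \bigl(-y_1/y_2\bigr)^{\alpha} u^{\alpha}\cdot y_1^{\beta}(1-u)^{\beta}\cdot\bigl(-y_1/y_2\bigr)\,du
\;=\; (-1)^{\alpha+1}\,y_1^{\beta+\alpha+1}\,y_2^{-\alpha-1}\,u^{\alpha}(1-u)^{\beta}\,du. \nonumber
\end{equation}
Integrating over $u\in[0,1]$ leaves precisely $\int_0^1 u^{\alpha}(1-u)^{\beta}\,du = B(\alpha+1,\beta+1)$ by Definition~\ref{defn:beta-integral}, so $\Phi(\alpha,\beta)$ equals the stated constant times $B(\alpha+1,\beta+1)$.

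The main point requiring care is not the calculation itself but the interpretation of the multivalued factors $t^{\alpha}$, $(y_1+y_2 t)^{\beta}$, and consequently $(-1)^{\alpha+1}$, $y_1^{\beta+\alpha+1}$, $y_2^{-\alpha-1}$: one must check that the branches chosen on $\gamma$ before the substitution match those obtained on $[0,1]$ after the substitution, which amounts to lifting the affine map $u\mapsto -\tfrac{y_1}{y_2}u$ to the universal cover of $\mathbb{CP}^1\setminus\{0,-y_1/y_2,\infty\}$ and tracking the resulting phase. A secondary technical point is convergence: the manipulation is literally valid in the half-plane $\operatorname{Re}(\alpha),\operatorname{Re}(\beta)>-1$, and the identification then extends to all $(\alpha,\beta)$ by the analytic-continuation argument recalled in Remark~\ref{rem:Euler-beta-classical-monodromy}, consistent with the invariance/identity-principle framework of Section~\ref{subsection:grassmannians}.
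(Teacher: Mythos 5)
Your proposal is correct and follows essentially the same route as the paper: the substitution $t=-\tfrac{y_1}{y_2}u$ (equivalently $u=-\tfrac{y_2}{y_1}t$ in the paper's proof of Theorem~\ref{phi=beta}) maps $\gamma$ to $[0,1]$, and the collected prefactor $(-1)^{\alpha+1}y_1^{\beta+\alpha+1}y_2^{-\alpha-1}$ agrees with \eqref{phibeta}. Your added remarks on branch tracking and on the convergence range $\operatorname{Re}(\alpha),\operatorname{Re}(\beta)>-1$ are sensible refinements but do not change the argument.
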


\begin{proof}
We have
\iffalse
\begin{equation}\label{}
\Phi (\alpha, \beta) \, = \,
 \int_0^{- y_1/y_2} t^{\alpha} (y_1 + y_2 t)^{\beta} \, dt \, = \,
 . % \nonumber
\end{equation}
\fi

\begin{align}\label{phibeta}
\Phi (\alpha, \beta) & = 
 \int_0^{- y_1/y_2} t^{\alpha} (y_1 + y_2 t)^{\beta} \, dt \nonumber \\
 & = 
 y_1^{\beta} \int_0^{- y_1/y_2} t^{\alpha} (1 + (y_2 / y_1) t)^{\beta} \, dt \nonumber \\
 & = 
- y_1^{\beta} \int_0^1 (- (y_1 / y_2) u)^{\alpha} (1 - u)^{\beta} \, (y_1 / y_2) du \nonumber \\ %[u = -(y_2 / y_1) t; - (y_1 / y_2) u = t]
 & = 
 (- 1)^{\alpha+1} y_1^{\beta + \alpha + 1} y_2^{-\alpha - 1} \int_0^1 u^{\alpha} (1 - u)^{\beta} \, du  \nonumber \\
 & = 
 (- 1)^{\alpha+1} y_1^{\beta + \alpha + 1} y_2^{-\alpha - 1} B (\alpha + 1, \beta + 1) , % \nonumber
\end{align}
where we use the substitution $u=-(y_2/y_1)t$. 
\end{proof}

\begin{rem}
The reduction 
%\color{red}(restriction)
%\color{black} 
of $ \Phi $ to $ B $ resembles properties of $ \mathbb{CP}^1 $ with its complex structure. In addition, one could guess the formula of Theorem \ref{phi=beta} knowing that the dimension of the moduli space of complex structures on the Riemann sphere $ \mathbb{CP}^1 $ is $0-3 = -3$ since $ \mathrm{dim} \, PGL_2 (\mathbb{C}) = 3 $.\footnote{More precisely, this means that the set of complex structures of $ \mathbb{CP}^1 $ is just a point, but $ \mathbb{CP}^1 $ also has a $ 3 $-dimensional automorphism group $ PGL_2 (\mathbb{C}) \simeq GL_2 (\mathbb{C}) / \mathbb{C}^* $ that allows for transformation of three arbitrary points to, say, $ \{ 0, 1, \infty \} $.} 
%\color{red}
%(MSI: when you are talking about moduli space of $ \mathbb{CP}^1 $, you mean the moduli space of what objects? For example, the moduli space of points on a complex plane form a moduli (a parameter) space)
%\color{black} 

% \color{red}
% The dimension of the moduli space shouldn't be a negative number. Can you double check this? 
% \color{black}

%  \color{BurntOrange}
%%The way of interpretation of this is the same as: the sphere is $ S^n \simeq SO (n + 1) / SO (n) $. The dimension of $ SO (n) $ is $ (n - 1) n / 2 $, so the dim of $ S^n $ is the diference: $ [(n + 1) n - (n - 1) n ]/2 = n $. 

%In the same way the dimension of moduli of complex structures on the Riemann sphere  
% is $ 0 - 3 = -3 $ (because $ \mathrm{dim} \, PGL_1 (\mathbb{C}) = 3 $).

We give an alternative description, which is not as ``elegant''\footnote{This description is still elegant in some sense since the dimension of the moduli is always $ 3 g - 3 $, where $g$ is the genus. 
% and not only for $ g > 2 $! \color{black}
}
but it is easier to comprehend: the dimension of moduli space is zero, and it admits a free action by $3$-dimensional group of automorphisms. 
\end{rem}

\begin{rem}
Functions $ \Phi $ and $ B $ differ by a constant factor and thus are essentially the same, with the same qualitative properties. This can be understood in terms of a torus action that can be read in the proof of Theorem \ref{phi=beta}; in the second equality in the proof, we factor out the constant $y_1$   
and then by a change of variable for the integral, we factor out $y_2$.  
These procedures may be thought of as an adequate action by the torus $ \mathbb{C}^* \times \mathbb{C}^* $ on $ G_{2, 3} $.
%\color{red}
%(MSI: what is this adequate action of the torus? Can you state it explicitly?) 
%\color{black}
\end{rem}

Permutation groups $ S_2 $ and $ S_3 $ act on the space $ G_{2, 3} $ by permuting the rows and the columns, respectively. The action of $ S_2 $ is not important as the basis elements for a $2$-dimensional subspace in a $3$-dimensional space produces the same $2$-dimensional space (regardless of the ordering on the basis vectors).
Furthermore, we can think of the $S_2$-action by permuting the rows of $G_{2,3}$ as restricting to the coordinate chart $\{[t:1] : t\in \mathbb{C} \}\subseteq \mathbb{CP}^1$, as opposed to the open set $\{ [1:t] : t\in \mathbb{C}\}$.\footnote{So the $S_2$-action can be thought of as moving between the charts, which is an additive operation.}

However, the action of $ S_3 $ is nontrivial. 
Since $\int_{\gamma}\omega = \int_0^1 l_1^{s_1}l_2^{s_2}l_3^{s_3}dt$, the permutation group $S_3$ acts on the exponents $s_i$, where the transposition $(i,i+1)$ swaps the exponents $s_i$ and $s_{i+1}$. 
Since $1^p = 1 $ for any $ p $, only the transposition $ (2, 3) $ acts on \eqref{eqn:reduction-three-products} in a nontrivial way.\footnote{There is a trivial factor $ 1^{\gamma} $ that is omitted, so the integrand reduces to $ 2 $ factors. Thus the biggest possible group that acts on the factors by permutation is $ S_2 $, and one checks by a straightforward computation that this is the subgroup of $ S_3 $ generated by the transposition $(2,3)$.}
This leads to the following result:

\begin{lem}
The action by the transposition $ (2, 3) $ on the space $ G_{2, 3} $ gives the formula
\begin{equation}\label{}
\Phi (\beta, \alpha) \, = \,
 ( - y_2)^{\alpha - \beta} \Phi (\alpha, \beta) .
\end{equation}
\end{lem}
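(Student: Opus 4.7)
The plan is to deduce the identity directly from Theorem~\ref{phi=beta} together with the symmetry $B(r,s) = B(s,r)$ of the classical Euler beta integral. The transposition $(2,3)$ swaps the exponents $s_2$ and $s_3$ in the integrand $\phi(t) = t^{\alpha}(y_1 + y_2 t)^{\beta}$, so by definition
\begin{equation*}
\Phi(\beta,\alpha) \;=\; \int_{\gamma} t^{\beta}(y_1 + y_2 t)^{\alpha}\, dt,
\end{equation*}
which is precisely the original integral with the roles of $\alpha$ and $\beta$ interchanged. Thus I can apply Theorem~\ref{phi=beta} to \emph{both} $\Phi(\alpha,\beta)$ and $\Phi(\beta,\alpha)$ without having to rerun the change of variables from scratch.

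First I would write out
\begin{equation*}
\Phi(\alpha,\beta) \;=\; (-1)^{\alpha+1}\, y_1^{\beta+\alpha+1}\, y_2^{-\alpha-1}\, B(\alpha+1,\beta+1),
\end{equation*}
and, by the same theorem with $\alpha$ and $\beta$ swapped,
\begin{equation*}
\Phi(\beta,\alpha) \;=\; (-1)^{\beta+1}\, y_1^{\alpha+\beta+1}\, y_2^{-\beta-1}\, B(\beta+1,\alpha+1).
\end{equation*}
Next I would invoke the symmetry $B(\alpha+1,\beta+1) = B(\beta+1,\alpha+1)$ of the Euler beta function, which lets the beta factors cancel when I form the ratio $\Phi(\beta,\alpha)/\Phi(\alpha,\beta)$. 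The $y_1$-powers also cancel, leaving only
\begin{equation*}
\frac{\Phi(\beta,\alpha)}{\Phi(\alpha,\beta)} \;=\; (-1)^{\beta-\alpha}\, y_2^{\alpha-\beta} \;=\; (-y_2)^{\alpha-\beta},
\end{equation*}
which is the desired formula after clearing the denominator.

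I do not expect any serious obstacle here: the proof is essentially bookkeeping with signs and exponents, and the only non-trivial input is the symmetry of $B$. If one prefers a more intrinsic argument that avoids passing through Theorem~\ref{phi=beta}, the alternative is to perform the substitution $u = 1 + (y_2/y_1)t$ (i.e.\ the affine involution of $\mathbb{CP}^1$ fixing the singular set $\{0, -y_1/y_2\}$ setwise and swapping the two finite singularities) directly on the integral defining $\Phi(\beta,\alpha)$; this change of variables exchanges the two linear factors $t$ and $y_1+y_2 t$, producing exactly the factor $(-y_2)^{\alpha-\beta}$ from the Jacobian and the constant pulled out of $(y_1+y_2t)^{\alpha}$. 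Either route produces the same identity, and I would present the first one as it is shorter and more transparent.
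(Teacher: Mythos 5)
Your proposal is correct and follows essentially the same route as the paper: the paper's proof likewise applies Theorem~\ref{phi=beta} with the arguments swapped and then uses the symmetry $\int_0^1 u^{\beta}(1-u)^{\alpha}\,du=\int_0^1 t^{\alpha}(1-t)^{\beta}\,dt$ to convert the prefactor $(-1)^{\beta+1}y_2^{-\beta-1}$ into $(-y_2)^{\alpha-\beta}$ times the prefactor of $\Phi(\alpha,\beta)$. Phrasing it as a ratio rather than a chain of equalities is only a cosmetic difference (and, like the paper, tacitly treats $(-1)^{\beta-\alpha}y_2^{\alpha-\beta}$ as $(-y_2)^{\alpha-\beta}$ for non-integer exponents).
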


\begin{proof}
We have
\begin{align}\label{}
\Phi (\beta, \alpha) & = \,
  (- 1)^{\beta+1} y_1^{\beta + \alpha + 1} y_2^{- \beta - 1} \int_0^1 u^{\beta} (1 - u)^{\alpha} \, du  \nonumber \\
 & = \,
  (- 1)^{\beta+1} y_1^{\beta + \alpha + 1} y_2^{- \beta - 1} \int_0^1 (1 - t)^{\beta} t^{\alpha} \, dt \nonumber \\
 & = \,
  (- 1)^{\beta - \alpha} y_2^{\alpha - \beta} \Phi (\alpha, \beta) \\ 
   & = \,
  (-y_2)^{\alpha - \beta} \Phi (\alpha, \beta) .
\end{align}
\end{proof}

Another property of $ \Phi $, as well as its more general counterpart, the hypergeometric function, is that it has close connection to a combinatorial function: the binomial coefficient
\begin{equation}
{ n \choose k } \, := \,
 \frac{n!}{(n - k)! k!} , 
\end{equation}
which can be shown in many ways, but we choose the one that can easily be generalized to other contexts. The main tool is \emph{Ramanujan's master theorem}, which we provide in more detail in the Appendix  (also see \cite{MR0106147}).

\begin{thrm}
The function $ \Phi $ is related to binomial coefficient by the equality
%\begin{equation}
% .
%\end{equation}
\begin{equation}\label{binom}
%\frac{\pi}{\sin \pi \alpha}
 \frac{\sin (\pi (\alpha+1))\sin (\pi(\beta+1))}{\pi \sin (\pi(\alpha+\beta+2))} 
 (- 1)^{\alpha+1} y_1^{- \beta - \alpha - 1} y_2^{\alpha + 1} \Phi (\alpha, \beta) \, = \,
 { - \alpha - \beta-2 \choose - \alpha-1  }.
\end{equation}
\end{thrm}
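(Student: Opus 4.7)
The plan is to combine Theorem \ref{phi=beta} with the chain of identities for $B(r,s)$ recalled in the Introduction (itself a consequence of Euler's reflection formula for $\Gamma$, with a more conceptual derivation via Ramanujan's master theorem deferred to the Appendix). The prefactor on the left-hand side of \eqref{binom} is arranged precisely to cancel the geometric prefactor appearing in Theorem \ref{phi=beta}, so the whole argument reduces to one substitution followed by one invocation of the classical identity.

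First, I substitute
$$\Phi(\alpha,\beta) = (-1)^{\alpha+1} y_1^{\beta+\alpha+1} y_2^{-\alpha-1} B(\alpha+1,\beta+1)$$
from Theorem \ref{phi=beta} into the left-hand side of \eqref{binom}. The powers of $y_1$ and $y_2$ cancel exactly, and the two factors of $(-1)^{\alpha+1}$ combine (with the same branch convention used in the proof of Theorem \ref{phi=beta}) to $1$, leaving
$$\frac{\sin(\pi(\alpha+1))\sin(\pi(\beta+1))}{\pi\sin(\pi(\alpha+\beta+2))}\, B(\alpha+1,\beta+1).$$

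Next, I specialize $r = \alpha+1$ and $s = \beta+1$ in the identity
$$B(r,s) = \frac{\Gamma(r)\Gamma(s)}{\Gamma(r+s)} = \frac{\pi\sin(\pi(r+s))}{\sin(\pi r)\sin(\pi s)}\binom{-r-s}{-r}$$
quoted in the Introduction, and solve for the binomial coefficient to get
$$\frac{\sin(\pi r)\sin(\pi s)}{\pi\sin(\pi(r+s))}\, B(r,s) = \binom{-r-s}{-r},$$
which is precisely the desired right-hand side $\binom{-\alpha-\beta-2}{-\alpha-1}$. This establishes \eqref{binom} on the dense open subset of the $(\alpha,\beta)$-parameter space where none of $\alpha+1, \beta+1, \alpha+\beta+2$ is an integer; the identity principle applied to the meromorphic continuations of both sides then extends the equality to the full parameter space.

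The main obstacle is not algebraic but analytic. One must carefully track the branch of the multivalued factor $(-1)^{\alpha+1}$ so that the two copies arising from the substitution genuinely cancel, and one must justify the Introduction's identity, which requires three applications of Euler's reflection formula $\Gamma(z)\Gamma(1-z) = \pi/\sin(\pi z)$ to rewrite $\Gamma(r), \Gamma(s), \Gamma(r+s)$ as ratios involving sines, together with the Gamma-function extension $\binom{n}{k} = \Gamma(n+1)/(\Gamma(k+1)\Gamma(n-k+1))$ of the binomial coefficient to complex arguments. Both points are standard and already signposted in the Introduction and Appendix, so no additional machinery is needed beyond what the paper has already developed.
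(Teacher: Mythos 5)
Your proof is correct, but it takes a genuinely different route from the paper's. Both arguments begin identically: substitute Theorem~\ref{phi=beta} so that the prefactor $(-1)^{\alpha+1}y_1^{-\beta-\alpha-1}y_2^{\alpha+1}$ cancels and the left-hand side collapses to $\frac{\sin(\pi(\alpha+1))\sin(\pi(\beta+1))}{\pi\sin(\pi(\alpha+\beta+2))}\,B(\alpha+1,\beta+1)$ (the paper makes the same tacit branch convention you flag, since its first displayed line already asserts that the prefactor times $\Phi$ equals $\int_0^1 u^{\alpha}(1-u)^{\beta}\,du$). You then finish by quoting the reflection-formula identity from the Introduction and solving for the binomial coefficient; the paper instead applies the homographic substitution $u=t/(1+t)$ to convert the Euler integral into the Mellin transform $\int_0^{\infty}t^{\alpha}(1+t)^{-\alpha-\beta-2}\,dt$, expands $(1+t)^{-\alpha-\beta-2}$ as a binomial series, and invokes Ramanujan's master theorem (proved in the Appendix) to identify the Mellin transform with the interpolated series coefficient, which is the binomial coefficient. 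Your route is shorter and more elementary, but it outsources the real content to two classical facts the paper never proves, namely $B(r,s)=\Gamma(r)\Gamma(s)/\Gamma(r+s)$ and Euler's reflection formula; the paper's route is self-contained modulo the Appendix and is the one the authors deliberately chose as ``the one that can easily be generalized to other contexts,'' i.e.\ to the super setting. It is worth noting that your execution is actually cleaner than the paper's: the expansion in \eqref{eqn:power-series-expansion} has a sign discrepancy in the odd-degree terms, and the index shift needed to land precisely on $\binom{-\alpha-\beta-2}{-\alpha-1}$ is left implicit there, whereas your specialization $r=\alpha+1$, $s=\beta+1$ closes without ambiguity.
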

%\color{red}
%(MSI: the above equality in \eqref{binom} is correct but can you simplify the $\sin$ quotient a little bit further? Please write this in a different color so that I can double check.)
%
%Note: this formula is correct but I wanted to simplify the quotient by the sines. 
%\color{black}

\textit{Proof}. \, Applying the homographic change of variables and transforming the unit interval to positive half-line, we get
\begin{align}\label{}
(- 1)^{\alpha+1} y_1^{- \beta - \alpha - 1} y_2^{\alpha + 1} \Phi (\alpha, \beta) & = \,
  \int_0^1 u^{\alpha} (1 - u)^{\beta} \, du  \nonumber \\
 & = \,
%  (- 1)^{\alpha} y_1^{\beta + \alpha + 1} y_2^{- \alpha - 1}
 \int_0^{\infty} t (1 + t)^{- \alpha - \beta-2} t^{\alpha - 1} dt. %\, dt \nonumber \\
% & = \,
%  (- 1)^{\beta - \alpha} y_2^{\alpha - \beta} \Phi (\alpha, \beta) . \nonumber
\end{align}
From the Taylor expansion of  $ (1 + t)^{- \alpha - \beta-2} $, 
we have %get
\begin{equation}\label{eqn:power-series-expansion}
- t (1 + t)^{- \alpha - \beta-2}  =  
% \sum_{n \geq 0} \frac{(\alpha + \beta)_n}{n!} (-t)^{n + 1} \, = \,
 \sum_{n \geq 0} { - \alpha - \beta-2 \choose n } (-t)^{n + 1} ,  
\end{equation}
and from Ramanujan's master theorem, we get (\ref{binom}). \, $ \square $ 

%have
%\begin{equation}\label{}
%\frac{\pi}{\sin \pi \alpha}
% \frac{\sin \pi \alpha}{\pi} \frac{(- 1)^{\alpha+1} y_1^{- \beta - \alpha - 1} y_2^{\alpha + 1} \Phi (\alpha, \beta)}{\sin \pi \alpha} \, = \,
% { - \alpha - \beta \choose \alpha  }.
%\end{equation}

\subsection{Vector superspaces}\label{subsection:superspace}

Let $ (x_1, x_2, \ldots , x_n ; \xi_1, \ldots  , \xi_m) $ denote the standard variables on the superspace $ \mathbb{C}^{n|m} $; such a point is called a {\em superpoint}. 
The basis of differential forms is given by $ (dx_1, dx_2, \ldots  , dx_n ; d\xi_1, \ldots  , d\xi_m) \in T^* \mathbb{C}^{n|m} $. 
Since we have a natural $ n|m $-splitting on $ \mathbb{C}^{n|m} $, we first investigate an integral over an {\em odd point}, which involves only the odd coordinates, i.e., such a point lives on $\mathbb{C}^{0|m} $. 

The integral satisfies basic rules such as linearity or shift-invariance. As functions on $ \mathbb{C}^{0|m} $ are precisely elements of the Grassmann algebra $ \mathbb{C} [\xi_1, \ldots , \xi_m] $, we would also require that, at least, the basis elements are integrable.

Let $ \int : \mathbb{C} [\xi_1, \ldots, \xi_m] \to \mathbb{C} $ be a linear functional satisfying linearity and shift-invariance. Then the anti-commutation relations $ \xi_i \xi_j = - \xi_j \xi_i $ imply that 
$$ 
d\xi_i \wedge d\xi_j = -(-1)^{\bar{\xi_j}\bar{\xi_i}} d\xi_j \wedge d\xi_i = d\xi_j \wedge d\xi_i ,
$$ 
where 
$\bar{\xi_i}\in \mathbb{Z}_2$ is the parity of $\xi_i$. 
Thus
 $ \int d \xi_1 \wedge \cdots \wedge d\xi_m = 0$.\footnote{The translation invariance implies $ \int \varphi (\xi + \xi_0) d\xi = \int \varphi (\xi) d\xi $ for any constant $ \xi_0 $. 
So letting $ \varphi(\xi) = \xi $, we get $ \xi_0 \int d\xi = 0 $; it follows that $\int d\xi = 0 $.} 
But 
\begin{equation}\label{eqn:nonzero-odd-integral}
 \int \xi_m \cdots \xi_1 \, d \xi_1 \wedge \cdots \wedge d\xi_m \neq 0 ,
 \end{equation} 
 where the domain of integration does not need to be specified\footnote{It is a property of Berezin integral that the domain of integration is not specified, but rather, the integral should be algebraic and indefinite.}. 
 As there is a freedom of scaling, we can assume that the integral \eqref{eqn:nonzero-odd-integral} is equal to one.\footnote{This assumption is natural since not only does it mimics the unit area of the unit hypercube, but also from the point of view of dual basis as we can read the relation $ \int \xi_i \: d \xi_j = \delta_{ij} $ as the duality condition. Moreover, as an operation on $ \mathbb{C} [\xi] $, it follows that the integral operation $ \int \cdot \, d \xi $ is equivalent to differentiation.}
  Then for any (polynomial) function $ \varphi : \mathbb{C}^{0| m} \to \mathbb{C} $, we have
\begin{equation}\label{}
\int_{\mathbb{C}^{0|m}}
 \varphi (\xi) d \xi \, = \,
 \varphi_{1, 2, \ldots , m} ,
\end{equation}
where $ \varphi_{1, 2, \ldots , m} $ is the `top' coefficient in the Taylor super-expansion:
\begin{equation}\label{}
\varphi (\xi) \, = \,
 \varphi_0 + \sum_{i = 1}^m \varphi_i \xi_i + \sum_{i < j}^m \varphi_{i, j} \xi_i \xi_j + \ldots + \varphi_{1, 2, \ldots , m} \xi_1 \xi_2 \cdots \xi_m, 
\end{equation}
where the parity from the odd variables  has been incorporated in the coefficient of each monomial.

\subsection{Super-Grassmannians}\label{subsection:super-grassmannians} 

An extensive treatment on supermanifolds and super-Grassmannians is found in \cite{MR1172996,MR3588978,witten2012notes}. 

Let $ V^{k|l} \subset \mathbb{C}^{n|m} $ be a vector super-subspace of dimension $ k|l $. Then $ V^{k|l} $ can be described using a basis, i.e., a basis of $ k + l $ linearly independent $ \mathbb{Z}_2 $-graded vectors in $ \mathbb{C}^{n|m} $. But such basis can be put into a block matrix of the form  
\begin{equation}\label{super-basis}
A \, = \,
 \left( \begin{array}{cc} x & \xi \\ \eta & y \end{array} \right) 
\end{equation}
of maximal rank $ k + l $, 
where $ x \in \mathbb{C}^k \otimes \mathbb{C}^n $, $ \xi \in \mathbb{C}^k \otimes \mathbb{C}^m $, $ \eta \in \mathbb{C}^l \otimes \mathbb{C}^n $ and $ y \in \mathbb{C}^l \otimes \mathbb{C}^m $. If $ U $ is a minor of $ A $ of maximal rank $ k + l $, then the basis corresponding to $ A $ is equivalent to the basis corresponding to $ U^{-1} A$, and all bases of $ V^{k|l} $ can be constructed in this way (for different minors). As $ U \in GL_{k|l} (\mathbb{C}) $, we are led to the following: 

\begin{defi}[Super-Grassmannian]\label{defn:supergrassmannian}
Let $ B $ be the set of all bases of $ V^{k|l} $. The homogenous space $ G_{k|l , n|m} : = GL_{k|l} ( \mathbb{C})\backslash B $ is called the \emph{Grassmannian super-manifold}.
\end{defi}
 
%\color{red}
%(MSI: also, maybe we should not use the word {\em manifold}, since manifolds are discussed over the real numbers. In this paper, we are working over the complex numbers.)
%\color{BurntOrange}
%I think "complex manifold" would be fine? Like in \url{https://en.wikipedia.org/wiki/Complex_manifold}
%\color{black} 

We also refer to $G_{k|l , n|m}$ as the {\em super-Grassmannian}. 

Another construction of the super-Grassmannian is via the unitary supergroup $ U (n|m) $. The unitary superbasis of the subspace $ V_{k|l} $ can be described as the %full
 unitary superbasis of $ \mathbb{C}^{n|m} $ modulo the action by $ U (n - k|m - l) $, which fixes $ n + m - k - l $ vectors, leaving the remaining $ k + l $ vectors arbitrary.

\begin{defi}[Stiefel super-manifold]
The homogenous space 
$$ S_{k|l , n|m} : = U (n|m) / U (n - k|m - l) 
$$ 
is called the \emph{Stiefel super-manifold}.
\end{defi}

In this super-unitary setting, the action of $ U (k|l) $ on the set of unitary basis of $ V^{k|l} $, i.e., on $ S_{k|l, n|m} $, corresponds to the action of $ GL_{k|l} (\mathbb{C}) $ on $ B $. Thus the definition of $ G_{k|l , n|m} $ is equivalent to the following: 

\begin{defi}\label{defn:Grassmannian-super-mfd}
The homogenous space 
$$ 
G_{k|l , n|m} : = U (n|m) / ( U (k|l) \times U (n - k|m - l)) 
$$ 
is called the \emph{Grassmannian super-manifold}.
\end{defi}

\begin{lem}
The two constructions of $ G_{k|l , n|m} $ in Definitions~\ref{defn:supergrassmannian} and \ref{defn:Grassmannian-super-mfd} are equivalent.
\end{lem}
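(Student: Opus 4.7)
The plan is to identify both quotients with the set of $(k|l)$-dimensional super-subspaces of $\mathbb{C}^{n|m}$, using the super-Stiefel manifold $S_{k|l,n|m}$ as an intermediary. First I would read the set $B$ of Definition~\ref{defn:supergrassmannian} as the space of all ordered $\mathbb{Z}_2$-graded bases of all $(k|l)$-dimensional super-subspaces of $\mathbb{C}^{n|m}$, equivalently the open locus of maximal-rank block matrices of the shape \eqref{super-basis}, or equivalently injective super-linear maps $\mathbb{C}^{k|l} \hookrightarrow \mathbb{C}^{n|m}$. Under this reading, $GL_{k|l}(\mathbb{C})$ acts freely on $B$ by change of basis, two bases lie in the same orbit iff they span the same super-subspace, and $GL_{k|l}(\mathbb{C})\backslash B$ is in natural bijection with the set of such subspaces.

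Next, I would exhibit the natural inclusion $\iota : S_{k|l,n|m} = U(n|m)/U(n-k|m-l) \hookrightarrow B$, sending a coset $[g]$ to the first $k+l$ columns of $g$, which form a unitary superbasis of the super-subspace they span. The core step is a super-analogue of Gram--Schmidt / polar decomposition: every $g \in GL_{k|l}(\mathbb{C})$ factors as $g = u \cdot h$ with $u \in U(k|l)$ and $h$ a super-Hermitian positive element. This is classical on the body of $GL_{k|l}(\mathbb{C})$ (giving the standard polar decomposition of $GL_k(\mathbb{C}) \times GL_l(\mathbb{C})$) and extends to the full supergroup by the usual nilpotent-extension argument, since odd-odd and mixed-parity entries are nilpotent in the structure sheaf. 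Consequently every $GL_{k|l}(\mathbb{C})$-orbit in $B$ meets $\iota(S_{k|l,n|m})$, and meets it in exactly one $U(k|l)$-orbit, because two unitary superbases of the same super-subspace are related by a super-unitary transformation of $V^{k|l}$.

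Assembling these pieces I would conclude the chain of identifications
\[
GL_{k|l}(\mathbb{C})\backslash B \;=\; U(k|l)\backslash S_{k|l,n|m} \;=\; U(k|l)\backslash U(n|m) / U(n-k|m-l) \;=\; U(n|m)/\bigl(U(k|l)\times U(n-k|m-l)\bigr),
\]
where in the last equality $U(k|l)$ and $U(n-k|m-l)$ embed commuting and block-diagonally in $U(n|m)$ as the stabiliser of the canonical splitting $\mathbb{C}^{n|m} = \mathbb{C}^{k|l} \oplus \mathbb{C}^{n-k|m-l}$, recovering Definition~\ref{defn:Grassmannian-super-mfd}. The main obstacle I foresee is the super polar decomposition: on the body it is the ordinary spectral theorem, but in the super setting one must handle the $\mathbb{Z}_2$-graded Hermitian structure and verify compatibility with the structure sheaf so that the identification holds not merely on points but as supermanifolds. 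Once this is in place, the remaining double-coset computation is formally identical to the classical Grassmannian case of Section~\ref{subsection:grassmannians}.
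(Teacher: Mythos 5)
Your proof is correct and follows essentially the same route as the paper's: the paper simply invokes that $U(n|m)$ is a deformation retract of $GL_{n|m}(\mathbb{C})$, so that unitary superbases may replace arbitrary ones in the quotient, and your super polar decomposition $g = u\cdot h$ is exactly the mechanism underlying that retraction. Your version is considerably more detailed than the paper's two-sentence argument and correctly isolates the one genuine technical point (extending polar decomposition through the nilpotent part of the structure sheaf) that the paper leaves implicit.
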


\begin{proof}
The group $ U(n|m) $ is a deformation retract of $ GL_{n|m}(\mathbb{C}) $. This leads us to the same quotient space, with the use of a unitary vector in place of any  vector. 
\end{proof}

%\\

In Section~\ref{subsection:properties-super-Grass}, we will use both constructions to explore the geometry and topology of Grassmannian super-manifolds.

\subsection{Properties of the super-Grassmannian}\label{subsection:properties-super-Grass}

As $ G_{k|l , n|m} $ is an algebraic supervariety, it can be described with a notion of \emph{duality}, i.e., by the sheaf $ \mathcal{O}_{G_{k|l , n|m}} $ of regular functions\footnote{The reconstruction of $G_{k|l , n|m}$  
from $ \mathcal{O}_{G_{k|l , n|m}} $ is done with use of the theory of super-schemes, \textit{\`a la}  Grothendieck. This allows for further generalization, like inclusion of finite fields or, more generally, arbitrary super-commutative rings.}
 on $ G_{k|l , n|m} $. One can also construct an atlas which is 
 super-algebraic, super-analytic, and super-smooth.

Let $ A $ be the matrix given in (\ref{super-basis}). It is a super-matrix of maximal rank so there exists a $ k|l \times k|l $ minor, say $ U $, such that $ U \in GL_{k|l}( \mathbb{C}) $. Then the rows of $ U^{-1} A $ also form a basis of the same subspace $ V^{k|l} \subset \mathbb{C}^{n|m} $. In this way we have an equivalence relation: $ A $ and $ B $ are equivalent if there exists $ U \in GL_{k|l}( \mathbb{C}) $ such that $ B = U^{-1} A $, where the matrix $ U^{-1} A $ is of the form
\begin{equation}\label{redsuper-basis}
U^{-1} A  =  
 \left( \begin{array}{cccc} t & 1 & 0 & \theta \\ \upsilon  & 0 & 1 & u \end{array} \right),
\end{equation}
with  $ t \in \mathbb{C}^k \otimes \mathbb{C}^{n - k} $, $ \theta \in \mathbb{C}^k \otimes \mathbb{C}^{m - l} $, $ \upsilon \in \mathbb{C}^l \otimes \mathbb{C}^{n - k} $ and $ u \in \mathbb{C}^l \otimes \mathbb{C}^{m - l} $. As $ t, u, \theta, \upsilon $ are arbitrary, we obtain a map $ \varphi: G_{k|l , n|m} \to \mathbb{C}^{(n - k)k|(m - l)l} $. As there are %$ { n - k \choose k } { m - l \choose l } $
 $ { n \choose k } { m \choose l } $ such maps, we can cover $ G_{k|l , n|m} $ with precisely $ { n \choose k } { m \choose l }$ super-charts.

To finish the construction of super-atlas on $ G_{k|l , n|m} $, we have to define the gluing condition on the intersections. Let $ \Omega $ be the chart associated to (\ref{redsuper-basis}) and let $ \Omega' $ be the chart associated to
\begin{equation}\label{}
(U')^{-1} A  =  
 B'  =  
 \left( \begin{array}{cccc} t' & 1 & 0 & \theta' \\ \upsilon' & 0 & 1 & u' \end{array} \right).
\end{equation}
Then the relation between the maps is given by 
$ B' =  (U')^{-1} A = (U')^{-1} U B $, corresponding to the gluing map $ \varphi' \circ \varphi^{-1}\in \Aut_{\mathbb{C}}(\mathbb{C}^{(n - k)k|(m - l)l})$.

\begin{exm}[The super-Grassmannian $ G_{2|2 , 4|4} $]
Let
\begin{equation}\label{}
A  = 
 \left( \begin{array}{cccccccc} x_{11} & x_{12} & x_{13} & x_{14} & \xi_{11} & \xi_{12} & \xi_{13} & \xi_{14} \\
x_{21} & x_{22} & x_{23} & x_{24} & \xi_{21} & \xi_{22} & \xi_{23} & \xi_{24} \\ \eta_{11} & \eta_{12} & \eta_{13} & \eta_{14} & y_{11} & y_{12} & y_{13} & y_{14} \\ \eta_{21} & \eta_{22} & \eta_{23} & \eta_{24} & y_{21} & y_{22} & y_{23} & y_{24} \end{array} \right).
\end{equation}
Then $ U^{-1} A = B $ is of the form
\begin{equation}\label{}
B   =  
 \left( \begin{array}{cccccccc} t_{11} & t_{12} & 1 & 0 & 0 & 0 & \theta_{13} & \theta_{14} \\
t_{21} & t_{22} & 0 & 1 & 0 & 0 & \theta_{23} & \theta_{24} \\ \upsilon_{11} & \upsilon_{12} & 0 & 0 & 1 & 0 & u_{13} & u_{14} \\ \upsilon_{21} & \upsilon_{22} & 0 & 0 & 0 & 1 & u_{23} & u_{24} \end{array} \right),  
\end{equation}
and $ (U')^{-1} A = B' $ is of the form
\begin{equation}\label{}
B'  =  
 \left( \begin{array}{cccccccc} t_{11} & 1 & t_{13} & 0 & 0 & 0 & \theta_{13} & \theta_{14} \\
t_{21} & 0 & t_{23} & 1 & 0 & 0 & \theta_{23} & \theta_{24} \\ \upsilon_{11} & 0 & \upsilon_{13} & 0 & 1 & 0 & u_{13} & u_{14} \\ \upsilon_{21} & 0 & \upsilon_{23} & 0 & 0 & 1 & u_{23} & u_{24} \end{array} \right).
\end{equation}
The matrix $ U $ is formed from $ A $ by choosing columns $3$, $4$, $5$, and $ 6 $ and $ U' $ is formed from $ A $ by choosing columns $2$, $4$, 
$5$, and $ 6 $. They are, respectively, of the form
\begin{equation}\label{}
%U \, = \,
 \left( 
 \begin{array}{cccc} 
 x_{13} & x_{14} & \xi_{11} & \xi_{12} \\
 x_{23} & x_{24} & \xi_{21} & \xi_{22} \\ 
 \eta_{13} & \eta_{14} & y_{11} & y_{12} \\ 
 \eta_{23} & \eta_{24} & y_{21} & y_{22} 
 \end{array} 
 \right) 
%U' \, = \,
\hspace{4mm}
\mbox{ and }
\hspace{4mm}
 \left( 
 \begin{array}{cccc} 
 x_{12} & x_{14} & \xi_{11} & \xi_{12} \\
 x_{22} & x_{24} & \xi_{21} & \xi_{22} \\ 
 \eta_{12} & \eta_{14} & y_{11} & y_{12} \\ 
 \eta_{22} & \eta_{24} & y_{21} & y_{22} 
 \end{array} 
 \right).
\end{equation}

On the intersection of the maps 
$\Omega $ and $ \Omega ' $, 
both $ U $ and $ U' $ are invertible. Thus $ U^{-1} $ and $ (U')^{-1} $ %exist. ..
 are well-defined. This implies that the product $ (U')^{-1} U $ is also well-defined, 
 providing a transition map between the charts corresponding to $ B $ and $ B'$, which is $(U')^{*} U$.\footnote{It does not matter whether we take a general or the unitary point of view, but the computations are simpler in the unitary setting.} 
%\begin{equation}\label{}
%U^{*} U' .   
%% 
%%   MSI: I did the computation for  U^{*} U'   
%%   on Mathematica and it is quite long. 
%%   Let's not type this in. 
%%  
%\iffalse
% \left[ \begin{array}{cccc} x_{13} & x_{14} & \xi_{13} & \xi_{14} \\
% x_{23} & x_{24} & \xi_{21} & \xi_{22} \\ \eta_{13} & \eta_{14} & y_{11} & y_{12} \\ \eta_{23} & \eta_{24} & y_{21} & y_{22} \end{array} \right] ,
%%U' \, = \,
% \left[ \begin{array}{cccc} x_{12} & x_{14} & \xi_{13} & \xi_{14} \\
% x_{22} & x_{24} & \xi_{21} & \xi_{22} \\ \eta_{12} & \eta_{14} & y_{11} & y_{12} \\ \eta_{22} & \eta_{24} & y_{21} & y_{22} \end{array} \right] .
%\fi 
%\end{equation} 
%. 
\end{exm} 

\subsection{Change of super-variables}\label{subsection:change-of-variables}

Having the odd part of the integral defined, we introduce the full integral of a (compactly supported) function on the superspace $ \mathbb{C}^{n|m} $. The function $ \varphi : \mathbb{C}^{n|m} \to \mathbb{C} $ can be expanded into a power series in the odd variables as
\begin{equation}\label{}
\varphi (x, \xi) \, = \,
 \varphi_0 (x) + \sum_{i = 1}^m \varphi_1 (x) \xi_i %+ \sum_{i < j}^m \varphi_{i, j} \xi_i \xi_j
 + \ldots + \varphi_{1, 2, \ldots , m} (x) \xi_1 \xi_2 \cdots \xi_m , \nonumber
\end{equation}
where, this time, each coefficient $\varphi_{\alpha}(x)$ is an element of the set $ C_c^{\infty} (\mathbb{C}^n)$ of compactly-supported smooth functions on $\mathbb{C}^n$. This implies that the integral $ \int \varphi \: dx  d\xi $ is of the form
\begin{equation}\label{eqn:power-series-odd-variables}
\int_{\mathbb{C}^{n|m}} \varphi (x, \xi) dx d\xi  =  
 \int_{\mathbb{C}^n} \varphi_{1, 2, \ldots , m} (x) \, dx  
\end{equation}
by using properties of Berezin integral, i.e., $ \int_{\mathbb{C}^{0|1}} d\xi = 0 $ and $ \int_{\mathbb{C}^{0|1}} \xi d\xi = 1 $.  
We thus have a super-analogue of Fubini's theorem: 
\begin{equation}
\int_{\mathbb{C}^{n|m}}  \varphi(x,\xi) dx d\xi = 
\int_{\mathbb{C}^n} \varphi_{1, 2, \ldots , m} (x) dx \int_{\mathbb{C}^{0|m}}\xi_1\xi_2\cdots \xi_m d\xi.  
\end{equation} 

Now, the following analysis is an important generalization of classical concepts in analysis, geometry and topology. To proceed, we first deliver a formula for the change of super-variables, paving way to prove the inverse and implicit function theorems, leading us to an analysis on supermanifolds.
% Also, TODO if inverse and implicit function theorems in the super-setting hasn't been done. 
% possibly an open problem. 
% future work. 

Let $ (y, \eta) = f (x, \xi) $. The function $ f $ has the derivative of the form\footnote{The partial derivatives $ \partial y / \partial x $, $\partial y/\partial \xi$, $\partial \eta/\partial x$, and $\partial \eta/\partial \xi$ denote the blocks of super-derivative.}
\begin{equation}\label{}
D f  =  
 \left( \begin{array}{cc} \dfrac{\partial y}{\partial x} & \dfrac{\partial y}{\partial \xi} \\ \dfrac{\partial \eta}{\partial x} & \dfrac{\partial \eta}{\partial \xi} \end{array} \right).
\end{equation}
The Jacobian, however, cannot be computed in the usual way. Instead, one uses the super-determinant, or {\em Berezinian}, defined as
\begin{equation}\label{}
\mathrm{sdet} \, D f  =  
 \det \left( \dfrac{\partial y}{\partial x} - \dfrac{\partial y}{\partial \xi} \left( \dfrac{\partial \eta}{\partial \xi}\right)^{-1} \dfrac{\partial \eta}{\partial x} \right) \det \left( \dfrac{\partial \eta}{\partial \xi}\right)^{-1} , \nonumber
\end{equation}
where $ \partial \eta / \partial \xi $ must be invertible. 
Thus we have the change of super-variables, and the corresponding integrals 
\begin{equation}\label{} 
\int \varphi (y, \eta) \: d y d \eta 
\end{equation}
and
\begin{equation}\label{}
%\int \mathrm{sdet} \, D f \, = \,
\int \varphi ( f (x, \xi))
\det \left( \dfrac{\partial y}{\partial x} - \dfrac{\partial y}{\partial \xi} \left( \dfrac{\partial \eta}{\partial \xi}\right)^{-1} \dfrac{\partial \eta}{\partial x} \right) \det \left( \dfrac{\partial \eta}{\partial \xi}\right)^{-1}  
d x d \xi \nonumber
\end{equation}
are equal.\footnote{We do not need to insert an absolute value around the two determinants in the integral since the Berezinian is an even function.} 

\begin{rem}
The change of super-variables is implicitly and minimally used in the construction of beta super-integral (see the proof of Theorem~\ref{thm:Euler-super-integral} in Section~\ref{subsection:special-case}). 
\end{rem} 
 
%  Absolute value needed around the two determinants?  
%  MSI: No. This is super-integration. 

\begin{comment} 
Another construction is via the super-unitary groups. Let $ U (n|m) $. The super-unitary basis of a subspace $ V_{k|l} $ can be described as the %full
 super-unitary basis of $ \mathbb{C}^{n|m} $ modulo the action of $ U (n - k|m - l) $, which sets the $ n + m - k - l $ vectors, leaving the remaining $ k + l $ vectors arbitrary.

\begin{defi}[Stiefel super-manifold]
The homogenous space $ S_{k|l , n|m} : = U (n|m) / U (n - k|m - l) $ is called \emph{Stiefel super-manifold}.
\end{defi}

In this super-unitary seting, the action of $ U (k|l) $ on the set of unitary bases of $ V^{k|l} $, i.e. on $ S_{k|l, n|m} $, corresponds to the action of $ GL (k|l , \mathbb{C}) $ on $ B $. Thus the definition of $ G_{k|l , n|m} $ is euivalent to the following one.

\begin{defi}[Super-Grassmannian]
The homogenous space $ G_{k|l , n|m} : = U (n|m) / ( U (k|l) \times U (n - k|m - l)) $ is called the \emph{Grassmannian super-manifold}.
\end{defi}

\begin{exr}
Proove that the both Definitions of $ G_{k|l , n|m} $ are indeed equivalent.
\end{exr}
\end{comment}

\section{Beta super-integral}\label{section:classical-super-beta-integral} 

For $l\leq m$,  
let $ A' $ be the super-matrix of the form
\begin{equation}
\left( 
\begin{matrix} 
x' & \xi' \\ 
\eta' & y'
\end{matrix}
 \right),
\end{equation}
with $ x' \in \mathbb{C}^2 \otimes \mathbb{C}^3 $, $ \xi' \in \mathbb{C}^2 \otimes \mathbb{C}^m $, $ \eta' \in \mathbb{C}^l \otimes \mathbb{C}^3 $ and $ y' \in \mathbb{C}^l \otimes \mathbb{C}^m $ such that 
$ A' $ has maximal rank. 
Then we can regard $ A' $ as a supersymmetric extension of a $ 2 \times 3 $ matrix $ x'$, of maximal rank. We can also identify $A'$ with the set of $2+l$ vectors in the vector superspace $ \mathbb{C}^{3|m}$.
 Because of this, one can regard $ A' $ as a point of the Grassmannian manifold $ G_{2|l, 3|m} $. Assuming that the invertible $ 2|l \times 2|l $ minor $ U $ is constructed from the columns 
$2$, $3$, $4, \ldots$, $l + 3 $ (see Section~\ref{subsection:properties-super-Grass} for more detail on our choice for these columns), 
we can construct a map\footnote{This map is actually a super-biholomorphic isomorphism $ \varphi : G_{k|l , n|m} \to \mathbb{C}^{k|l \times  (n - k|m - l)} $ since $G_{k|l , n|m}$ is locally isomorphic to $\mathbb{C}^{k|l \times  (n - k|m - l)}$. 
But since  $ \mathbb{C}^{k|l \times  (n - k|m - l)} $ and $ \mathrm{Hom} (\mathbb{C}^{k|l}, \mathbb{C}^{n - k | m - l}) $ are the same as vector superspaces, we will identify the two.}
 $ A = U^{-1} A' $, which is of the form  
\begin{equation}\label{eqn:matrixA-G-23-LM}
A  =  
 \left( \begin{matrix} x & I_2 & 0 & \xi \\ \eta & 0 & I_l & y
\end{matrix}
 \right) ,
\end{equation}
with $ x \in \mathbb{C}^2 \otimes \mathbb{C} $, $ \xi \in \mathbb{C}^2 \otimes \mathbb{C}^{m - l} $, $ \eta \in \mathbb{C}^l \otimes \mathbb{C} $ and $ y \in \mathbb{C}^l \otimes \mathbb{C}^{m - l} $, 
with all nonconstant entries being arbitrary.

To $ x $ one can (in the classical setting) associate the function %hypergeometric Gra
\begin{equation}
\Phi (s, x) \, := \,
 \int_{\gamma_x} \omega_x ,
\end{equation}
where $ \omega_x = %(x_11 t_1 + x_11 t_1)^{s_1} (x_11 t_1 + x_11 t_1)^{s_2} (x_11 t_1 + x_11 t_1)^{s_3} \, \mu $
 l_1^{s_0} l_2^{s_1} l_3^{s_2} \, dt $, %\, \mu $, with
% $ l_1 = 1 $, %$ l_i : = x_{1i} t_1 + x_{2i} t_2 $ and $ \mu $ - a natural volume form on $ \mathbb{CP}^1 $ 
 $l_1 = x_{11}+x_{21}t$, $l_2 = 1$, $l_3 = t$, 
% \color{red}
% $l_1 : = x_{11} t_1 $, $ l_2 = x_{12} + x_{22} t $,
% (MSI: $x\in \mathbb{C}^2\otimes \mathbb{C}$ ($x$ is a $2\times 1$ matrix), so did you mean 
% $\omega_x= l_1^{s_1} l_2^{s_2} \, dt $, 
% where $l_1 = x_{11} t_1$  and $l_2 = $ (I'm not sure what $l_2$ should be)???)
% \color{black}
%  \color{BurntOrange}
%$ l_2 = x_{12} + x_{22} t $, as above. Is it wrong? I don't see why. And if you mean $ l_3 $, then it is the trivialized factor, coming from $ I_2 $ (we have $ 1^p $ times something is just something). I think that maybe the notation is not so good... :(
% \color{black}
 and 
 $ \gamma_x $ is a cycle associated to a $ 3 $-point arrangement in $ \mathbb{CP}^1 $, bounded by the equations $  l_i = 0 $. 

In what follows, we will describe the super-generalization of this construction.

\subsection{Construction of the general beta super-integral}\label{subsection:general-beta-integral}
 
 The projective superspace $\mathbb{CP}^{1|l}$ has homogeneous coordinates $[ t'_1 : t'_2 %;
 | \theta'_1 , \ldots , \theta'_l ] $.  
  If $t_1'\not=0$, then $[ t'_1 : t'_2 %;
 | \theta'_1 , \ldots , \theta'_l ] = [1 : t %;
 | \theta_1 , \ldots , \theta_l ]$, which can be viewed as 
the point 
\begin{equation}
%[ 1 : t ; \theta_1 , \ldots  , \theta_l ] .
( 1 : t %;
 | \theta_1 , \ldots , \theta_l ) \in \mathbb{C}^{1|l} .
\end{equation}

The reduced matrix $ A $ in \eqref{eqn:matrixA-G-23-LM} acts on $ \mathbb{CP}^{1|l} $ from the right as multiplication (cf. \cite{MR3588978}): 
\begin{equation}\label{eqn:matrixA-G-23-LM-multn}
[ t'_1 : t'_2 %;
 | \theta'_1 , \ldots , \theta'_l ]  
 \left( \begin{matrix} x & I_2 & 0 & \xi \\ \eta & 0 & I_l & y
\end{matrix}
 \right).  
\end{equation}
Thus it also acts on the space $ \mathbb{C}^{1|l} $ from the right, and the result is a family of affine forms: three even\footnote{By even, we mean that the variables 
$ \xi : = \eta_{ij} \theta_i $ and $ \xi' : = \eta_{ij}' \theta_i' $ satisfy 
$ \xi ' \xi = \eta_{ij}' \theta_i' \eta_{ij} \theta_i = \eta_{ij} \eta_{ij}' \theta_i' \theta_i = \eta_{ij} \theta_i \eta_{ij}' \theta_i' = \xi \xi' $, so the minus sign does not apply in their case when it comes to the commutation relations. Thus they are even.}
  forms:  
\begin{align}
l_1 &  =  x_{11} + x_{21} t + \eta_{11} \theta_1 + \eta_{21} \theta_2+ \ldots + \eta_{l1} \theta_l,   \nonumber \\
l_2 &  =  1,  
\nonumber \\
l_3 &  = t,    %x_{13} +
  %  x_{23} t + \eta_1 \theta_1 + \ldots  + \eta_l \theta_l
 \nonumber %\\
\end{align} 
and $ m $   odd forms: 
\begin{align}
\lambda_i & =   \theta_i  \hspace{4mm}\mbox{ for } 1 \leq  i \leq l  ,   \nonumber \\
%\lambda_2 &= \theta_2 \nonumber \\
% &\hspace{2mm} \vdots   \nonumber \\
%\lambda_l &  =  \theta_l    \\
 \lambda_{l + 1} &  =  
 \xi_{1, l + 1} + \xi_{2, l + 1} t + y_{1, l + 1} \theta_1 + \ldots + y_{l, l + 1} \theta_l ,  \nonumber \\
  \lambda_{l + 2} &  =  
 \xi_{1, l + 2} + \xi_{2, l + 2} t + y_{1, l + 2} \theta_1 + \ldots + y_{l, l + 2} \theta_l,  \nonumber \\
 &\hspace{2mm} \vdots \, \nonumber \\
 \lambda_m &  =  
 \xi_{1, m} + \xi_{2, m} t + y_{1, m} \theta_1 + \ldots + y_{l, m} \theta_l .\nonumber
\end{align}

 \begin{defi}\label{defn:super-beta-integral-general}
The {\em super-beta integral} is defined as %$ $
\begin{equation}\label{sup-bet}
\Phi (s|\sigma ; x, y, \xi, \eta) \, : = \,
  \int_{\gamma_x \times \mathbb{C}^{0|l}} l_1^{s_1} l_2^{s_2} l_3^{s_3} \lambda_1^{\sigma_1} \cdots \lambda_m^{\sigma_m} dt \wedge d \theta_1 \wedge \cdots \wedge d \theta_l .
\end{equation}
\end{defi}

%\begin{rem}
%In the integral (\ref{sup-bet}), the form $ l_2^{s_2} $ doesn't depend on $ t $ and, as such, can be factored out.
%\end{rem}
%
%\begin{rem}
%From $ l_1 $ and $ l_3 $ we can read what is the cycle $\color{red}????\color{black}$. 
%\color{red}
%MSI: Maybe leave this as an open problem or future work? 
%\color{black}
%Outside of that, 
%\color{red}
%not done.
%\color{black}
%\color{red}
%(MSI: can you finish this remark? Is the cycle still given by two pairwise singular points joined by a straight line segment as in the classical setting, but we have odd variables floating around?) 
%\color{black}
%\color{BurntOrange}
%I think that the odd variables are taken from the entire line, but I'm not sure about that... This is the delicate point in my previous questions and (potential) questions for Dimitry.
%\color{black}
%\color{red}
%(MSI: ``I think that the odd variables are taken from the entire line''. Sorry but this line in quotes does not make sense. Do you mean to say that the odd variables are allowed to be arbitrary numbers or do you mean we may need the odd variables to be invertible in order to dehomogenize the odd part?)
%\color{black}
%\end{rem}

\subsection{Special case of the beta super-integral}\label{subsection:special-case}
We will first prove Theorem~\ref{thm:Euler-super-integral}.

\begin{proof}
Consider the matrix
\begin{equation}
A  =  
 \left(
 \begin{array}{ccccc}
 x_{11} & x_{12} & x_{13} & \xi_{11} & \xi_{12} \\
 x_{21} & x_{22} & x_{23} & \xi_{21} & \xi_{22} \\
 \eta_{11} & \eta_{12} & \eta_{13} & y_{11} & y_{12}
 \end{array} \right),
\end{equation}
of maximal rank. This matrix can be identified with the set of $ 3 $ vectors in the vector superspace $ \mathbb{C}^{3|2} $. Furthermore, as the rank is maximal, the image of the map that corresponds to $ A $ is a $ 2 | 1 $-dimensional super-subspace since the rows correspond to linearly independent vectors, i.e., $2$ even vectors and $1$ odd vector.

As $ A $ is of maximal rank, there exists a non-zero minor, say $ U $. Without loss of generality, let's assume that $ U $ is of the form
\begin{equation}
U  =  
 \left(
 \begin{array}{ccc}
 x_{12} & x_{13} & \xi_{11} \\
 x_{22} & x_{23} & \xi_{21} \\
 \eta_{12} & \eta_{13} & y_{11} 
 \end{array} \right).
\end{equation}
Then the $ 2|1 $-subspace of $ \mathbb{C}^{3|2} $ that is described by $ A $ is also described by  $U^{-1}A$, which is of the form
\begin{equation}
U^{-1} A   =  
 \left( 
 \begin{array}{ccccc}
 x'_{11} & 1 & 0 & 0 & \xi'_{12} \\
 x'_{21} & 0 & 1 & 0 & \xi'_{22} \\
 \eta'_{11} & 0 & 0 & 1 & y'_{12}
 \end{array} \right).
\end{equation}
With the data provided by $ U^{-1}A $, we can form a family of two linear forms, $ x'_{11} t_1 + x'_{21} t_2 + \eta'_{11} \theta_1 $ and $ \xi'_{12} t_1 + \xi'_{22} t_2 + y_{12}' \theta_1 $, together with three characters: $2$ even ones $ t_1, t_2 $ and $1$ odd one $ \theta_1 $, by multiplying on the left by $(t_1,t_2;\theta_1)$.

The affine counterparts are 
\begin{equation}\label{eqn:affine-counterparts}
\begin{split}
l_1 & =  1 + x_{21}' t + \eta_{11}' \theta,  \\
l_2 & =  \xi_{12}' + \xi_{22}' t + \theta ,  \\ 
\end{split}
\end{equation}
where $x_{11}'$ and $y_{12}'$ in \eqref{eqn:affine-counterparts} are canceled under the change of variables.  
%\color{WildStrawberry}
%Can you tell me what change of variables that you used? Actually, I think I remember the appropriate change of coordinates so I can put this part together (Thursday 1:32 pm, 6/28/2018). 
%\color{black}
% they are cancelled out by a change of variables. 

For the ease of reading, let us omit the primes over the coefficients. 
The group, which depends locally,   
%\color{BurntOrange} 
%Actually, the group here is variable: the idea is to see that if some group acts on the set of forms, it also acts on their powers (and more generally - the function defined on this set of forms).
%\color{black} 
acts on the set of linear forms $l_i$, which, in fact, acts on the powers $l_i^{\alpha_i}$ of the linear forms. 
After fixing exponents, one constructs the form
\begin{equation}
\omega \, := \,
 t^s l_1^{p_1} l_2^{p_2} \theta \: d \theta \: d t ,
\end{equation}
which can be integrated (for certain values of $ s, p_1, p_2 $) over a region $ \gamma $ in $ \mathbb{C}^{1|1}=\Spec(\mathbb{C}[t;\theta])$.
%  constructed from forms defining $ \omega $. 
 The first factor constrains us to $ t > 0 $, while from $ l_1 $ it follows that 
 $ t > - x_{21}^{-1} $ 
since if in, say, $ (1 - tx)^{\alpha} $, we allow both $ t x < 0 $ and $ t x > 0 $, 
%\color{red}
%(MSI: if $t x > 0$, then $1-t x$ could be negative or positive. I don't understand how the condition ``$ t x < 0 $ and $ t x > 0 $'' imply a jump coming from the multi-valuedness. Is this an argument to conclude that we need to assume $tx=0$ instead?)
%\color{black}
then there is a jump coming from the multi-valuedness of $ (1 - tx)^{\alpha} $.\footnote{In the classical case of logarithms, choosing a loop around the singularity and running around brings one to another point other than the original point (the new point will differ by $\pm 2 \pi $, depending on the orientation of the loop).} 
This can be overcome by considering the universal covering for complex spaces and their monodromy, or, so called, twisted cohomology (but in this manuscript, we will rule out these cases by restricting to the parameters of the classical beta function). 
  
% 
%\color{red}
%(MSI: Open problem. Future work.  
% Let's consider the case when $t x_{21}+1 \leq 0$ so that this will strengthen this paper. We need to consider monodromy. What do you think?)
%\color{black}
%\color{BurntOrange}
%This would be nice, however I'm not sure how to do it. We would have to \textit{invent} this even-odd monodromy yet. I would have to start by analyzing the simple examples. And this may take a lot of time... Anyway I can start only when I'm back.
%\color{black} 

The parameter $ \theta $ is allowed to be arbitrary. Thus, we get the following formula:  %Definition.
\begin{equation}
\int_{\gamma} \omega \, := \,
 \int_{- 1/x_{21}}^0 \int_{-\infty}^{\infty}  
 t^s (1 + x_{21} t + \eta_{11} \theta)^{p_1} (\xi_{12} + \xi_{22} t + \theta)^{p_2} \theta \: d \theta d t .
\end{equation}
This can be further simplified using the change of variables 
$ x_{21} t = - u $, with $ dt = - du / x_{21} $: 
\begin{equation}
\int_{\gamma} \omega  = 
- (- x_{21})^{-s - 1} \int_0^1 \int_{-\infty}^{\infty} u^s (1 - u + \eta_{11} \theta)^{p_1} (\xi_{12} - \xi_{22} x_{21}^{-1} u + \theta)^{p_2} \theta \: d \theta d u .  
\end{equation}
Further simplification is done by letting $\eta = \eta_{11}$, $\xi = \xi_{12}$ and  
$\xi'= - \xi_{22} x_{21}^{-1} $ as only the integral is of importance. %
%$$ ... $$
This concludes the proof.
\end{proof}  

\begin{rem}\label{rem:Theorem1-monodromy}
We will discuss in our future work the case when $t x_{21}+1 \leq 0$ for Theorem~\ref{thm:Euler-super-integral} by considering monodromy for the super-variables. 
\end{rem}

We will now prove Theorem~\ref{thm:Euler-super-integral-relation}.
\begin{proof}
By Theorem~\ref{thm:Euler-super-integral}, we have 
\begin{equation}
\Phi(s,p_1,p_2; \xi, \xi', \eta)  = - (-x_{21})^{-s-1} \int_0^1 \int_{-\infty}^{\infty} u^s (1 - u + \eta \theta)^{p_1} (\xi + \xi' u + \theta)^{p_2} \theta \: d \theta d u .
\end{equation}
Since $B(s,p_1,p_2; \xi, \xi', \eta)$ differs from $\Phi(s,p_1,p_2; \xi, \xi', \eta)$ by a scalar, 
we conclude that 
$$
\Phi(s,p_1,p_2; \xi, \xi', \eta)  = - (-x_{21})^{-s-1} B(s,p_1,p_2; \xi, \xi', \eta), 
$$
where 
\[ 
B(s,p_1,p_2; \xi, \xi', \eta) = \int_0^1 \int_{-\infty}^{\infty} u^s (1 - u + \eta \theta)^{p_1} (\xi + \xi' u + \theta)^{p_2} \theta \: d \theta d u .
\] 
\end{proof}

\section{Integrating the super-beta integral with respect to the odd variables}\label{section:super-beta-integral-odd-variables}

We will finally prove Theorem~\ref{thrm:super-beta-gamma-function}. 

\begin{proof}
Recall the binomial expansion 
\begin{equation}
(x + y)^p \, = \,
 \sum_{k \geq 0} { p \choose k } x^{p - k} y^k .
\end{equation}
Applying it to powers of the affine forms under the integral in Theorem~\ref{thm:Euler-super-integral}, %$ l_1 $ and $ l_2 $,
 we get
\begin{equation}\label{eqn:affine-forms-power-series-expansion}
\begin{split}
(1 - u + \eta \theta)^{p_1} & =  
 \sum_{k \geq 0} { p_1 \choose k } (1 - u)^{p_1 - k} (\eta \theta)^k,  \\
(\xi + \xi' u + \theta)^{p_2} & =  
 \sum_{k \geq 0} { p_2 \choose k } (\xi + \xi' u)^{p_2 - k} \theta^k , \\ 
 \end{split} 
\end{equation}
but $ \theta^2 = 0 $ and $ \eta^2 = 0 $ 
by supersymmetry. That is, 
$ \xi_i \xi_j = - \xi_j \xi_i $ for any super-variable $ \xi_i $, 
so $ \xi_i^2 = - \xi_i^2 $. 
This implies that $ \xi_i^2 = 0 $.  
It follows that $ (\eta \theta)^2 = 0 $. 
The equalities in \eqref{eqn:affine-forms-power-series-expansion} simplify to
\begin{equation}\label{eqn:eqn:affine-forms-poly}
\begin{split}
(1 - u + \eta \theta)^{p_1} & =  
 (1 - u)^{p_1} + p_1 (1 - u)^{p_1 - 1} \eta \theta ,  \\
(\xi + \xi' u + \theta)^{p_2} & =  
 (\xi + \xi' u)^{p_2} + p_2 (\xi + \xi' u)^{p_2 - 1} \theta . \\ 
 \end{split}
\end{equation}
Multiplying the two equalities in \eqref{eqn:eqn:affine-forms-poly} together, we get\footnote{Variables $ \xi $ and $ \xi' $ are odd; however, the parity operator is not relevant as the only summand that contributes to the integral is the one with $ 0 $-th power of $ \theta $.}
\begin{equation}\label{eqn:multi-two-eqns}
\begin{split}
(1 - u + \eta &\theta)^{p_1} (\xi + \xi' u + \theta)^{p_2} 
=  
 (1 - u)^{p_1} (\xi + \xi' u)^{p_2}  \\ 
 &\hspace{4mm}  
 +  \cancel{p_1 (1 - u)^{p_1 - 1} \eta \theta  (\xi + \xi' u)^{p_2}}  
 +  \cancel{(1 - u)^{p_1}  p_2 (\xi + \xi' u)^{p_2 - 1} \theta } 
 + \ldots 
 \end{split}
\end{equation}
where the two terms cancel due to supersymmetry. 
Since $\int_{-\infty}^{\infty} \theta \, d\theta =1$, 
the beta super-integral for $ B $ then reduces to 
\begin{equation}\label{eqn:integral-formula-for-B}
\begin{split}
B (s, p_1, p_2; \xi, \xi', \eta)  &=  
 \int_0^1 u^s (1 - u)^{p_1} (\xi + \xi' u)^{p_2} \, d u  \\
 &=  
 \xi^{p_2} \int_0^1 u^s (1 - u)^{p_1} \left(1-\left( - \frac{\xi'}{\xi}\right) u\right)^{p_2} \, du. 
 \end{split}
\end{equation}
Since 
\[ 
\beta(s+1,p_1+1) = \frac{\Gamma(s+1)\Gamma(p_1+1)}{\Gamma(p_1+s+2)} 
\hspace{4mm} 
\mbox{ for } s+1,p_1+1 >0 
\] 
(see 
Theorem 7 on page 19 in \cite{MR0107725} or 
Theorem 2.5.9 on page 46 in \cite{hannah2013identities}), 
and since Euler-Gauss hypergeometric function (see Theorem 3.4.1 on page 71 in \cite{hannah2013identities}) gives us 
\[ 
_2F_1 \left( \begin{matrix} -p_2 , s+1 \\ p_1 + s+2 \end{matrix} \, \Big| - \frac{\xi'}{\xi}  \right)
=  \frac{\Gamma(p_1+s+2)}{\Gamma(s+1)\Gamma(p_1+1)} 
\int_0^1 u^s (1-u)^{p_1}  \left(1-\left( - \frac{\xi'}{\xi}\right) u\right)^{p_2} \, du, 
\] 
we have 
\[ 
\xi^{p_2}  \, 
{}_2F_1 \left( \begin{matrix} -p_2 , s+1 \\ p_1 + s+2 \end{matrix} \, \Big| - \frac{\xi'}{\xi}  \right)
= 
\frac{\Gamma(p_1+s+2)}{\Gamma(s+1)\Gamma(p_1+1)} B (s, p_1, p_2; \xi, \xi', \eta). 
\] 
We conclude 
\[ 
B (s, p_1, p_2; \xi, \xi', \eta) 
= 
\frac{\Gamma(s+1)\Gamma(p_1+1)}{\Gamma(p_1+s+2)}
\xi^{p_2}  \, 
{}_2F_1 \left( \begin{matrix} -p_2 , s+1 \\ p_1 + s+2 \end{matrix} \, \Big| - \frac{\xi'}{\xi}  \right). 
\] 
\end{proof}

%\color{red}
%Future work. Open problem. 
%If we have an explicit representation of the gamma super function, then that should be included. 
%\color{black}
%\color{BurntOrange}
%I suppose your right, but - again - it has to be properly defined yet...
%\color{black}

%\section{Special cases}\label{section:special-cases}
%
%\color{purple}
%MSI: if we can get interesting functions when we specialize to special cases in the super-setting, then that would be great. 
%
%For example, get elementary super-functions, Bessel super-functions, orthogonal super-polynomials, and show that these functions are connected to other areas of mathematics (like in the geometric setting), etc. 
%
%% Maybe try to make a connection between our work and string theory (or at least mention how string theory plays a role in the super-setting).
%\color{black}

% https://en.wikipedia.org/wiki/Generalized_hypergeometric_function 

%\begin{prob}
%Compute the integral (over a square in $ \mathbb{C}^{1, 1} $)
%\end{prob}

%\textit{Answer}. \, No. The dimension of $ G_{2, 4} $ is $ 4 $, while the dimension of $ \mathbb{C}P^1 \times \mathbb{C}P^1 $ is $ 2 $. \, $ \square $

%\begin{clm}
%Zachodzi $ H^0 (M) \simeq \mathbb{C}^r $, gdize $ r $ jest licz\k{a}.
%\end{clm}

%\textit{Dow\'od}. \; Wiemy, \.ze 
%To ko\'nczy dow\'od. $ \square $
%\\

%\begin{rem}
%
%\end{rem}

%\begin{itemize}
%\item %.
%\item 
%\end{itemize}

%\begin{exm}
%\end{exm}

%		\subsection{Other, more general sequences of eigenvalues }

\section*{Appendix}

We will now state and prove Ramanujan's master theorem for completeness. Also see \cite{MR2994092}, after Theorem 3.2 on page 4.

\begin{thrm}\label{thm:sinphi}
If %$ $
\begin{equation}
\frac{\pi}{\sin \pi s} F (- s) \, = \,
 \int_0^{\infty} t^{s - 1} \varphi (t) \, dt ,
\end{equation}
then
\begin{equation}
\varphi (t) \, = \,
 \sum_{n \geq 0} (- t)^n F (n).
\end{equation}
\end{thrm}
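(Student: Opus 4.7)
The plan is to recognize the hypothesis as a Mellin transform identity and invert it via contour integration. Specifically, the integral $\int_0^\infty t^{s-1}\varphi(t)\,dt$ is, by definition, the Mellin transform $\mathcal{M}[\varphi](s)$. Thus the hypothesis reads
\begin{equation}
\mathcal{M}[\varphi](s) \;=\; \frac{\pi}{\sin\pi s}\,F(-s),
\end{equation}
which, under suitable regularity assumptions on $F$ (Hardy-type analyticity and growth in a right half-plane), lets us apply the Mellin inversion formula
\begin{equation}
\varphi(t) \;=\; \frac{1}{2\pi i}\int_{c-i\infty}^{c+i\infty} t^{-s}\,\frac{\pi}{\sin\pi s}\,F(-s)\,ds,
\end{equation}
with $c$ chosen in the fundamental strip $0<c<1$.

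Next, I would close the contour of integration to the left with a sequence of large semicircles and invoke the residue theorem. The meromorphic kernel $\pi/\sin\pi s$ has simple poles precisely at the integers, and at $s=-n$ (with $n\ge 0$) its residue equals $(-1)^{n}$. Consequently the residue of the full integrand at $s=-n$ is
\begin{equation}
\mathrm{Res}_{s=-n}\Bigl[\,t^{-s}\tfrac{\pi}{\sin\pi s}F(-s)\Bigr] \;=\; t^{n}(-1)^{n}F(n) \;=\; (-t)^{n}F(n).
\end{equation}
Summing over $n\ge 0$ and observing that the residues at $s=1,2,\dots$ on the right of the contour are not collected (and are in any case absent from the analytic region where the integral identity holds), one obtains the announced expansion
\begin{equation}
\varphi(t) \;=\; \sum_{n\ge 0}(-t)^{n}F(n).
\end{equation}

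The principal obstacle is the \emph{justification of the contour closure}, i.e.\ showing that the integrals along the large left semicircles vanish in the limit. This forces one to impose Hardy-type hypotheses on $F$: analyticity in a half-plane $\mathrm{Re}(s)>-\delta$ together with a growth bound of the form $|F(s)|\le C e^{P\mathrm{Re}(s)+A|\mathrm{Im}(s)|}$ with $A<\pi$, so that the exponential decay of $1/\sin\pi s$ along vertical strips dominates. Under these standard conditions the swap of summation and limit is legitimate, the contour at infinity contributes zero, and the residue computation yields the stated identity. I would note these assumptions explicitly (as is customary since Hardy's rigorous treatment) and refer the reader to \cite{MR2994092} for the analytic technicalities, since the combinatorial identity is the content actually needed in the main text.
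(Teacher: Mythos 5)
Your proof is correct under the hypotheses you state, but it follows a genuinely different route from the paper's. You invert the Mellin transform and close the contour to the left, collecting the residues of $t^{-s}\frac{\pi}{\sin\pi s}F(-s)$ at $s=-n$, $n\ge 0$ (your residue computation $(-1)^n t^n F(n)=(-t)^nF(n)$ is right, since $\sin\pi s\sim(-1)^n\pi(s+n)$ near $s=-n$); this is the classical Hardy argument, and its cost is exactly the analytic hypotheses you flag — analyticity of $F$ in a half-plane and a growth bound $|F(s)|\le Ce^{P\mathrm{Re}(s)+A|\mathrm{Im}(s)|}$ with $A<\pi$ so that the left semicircular arcs contribute nothing. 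The paper instead works forward: it assumes $F$ is itself a Mellin transform, $F(s)=\int_0^\infty t^{s-1}f(t)\,dt$, writes $\frac{\pi}{\sin\pi s}$ as the Mellin transform of $(1+t)^{-1}$, applies the Mellin convolution theorem to realize $\varphi(t)=\int_0^\infty\frac{f(1/u)}{u+t}\,du$, and then expands $(u+t)^{-1}$ as a geometric series in $t/u$, identifying each coefficient as $F(n)$ via the substitution $v=1/u$. The paper's approach avoids contour integration entirely and is computationally more elementary, at the price of the extra structural assumption on $F$ and an unjustified interchange of sum and integral (valid for small $t$); your approach makes the sufficient conditions on $F$ explicit and pins down where the series representation converges, which is the sharper statement. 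Either proof supports the use made of the theorem in the main text; if you keep yours, state the Hardy hypotheses as part of the theorem rather than only in the proof.
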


\begin{proof}
If
\begin{equation}
F (s) \, = \,
 \int_0^{\infty} t^{s - 1} f (t) \, dt 
 \quad
 \mbox{ and }
 \quad
G (s) \, = \,
 \int_0^{\infty} t^{s - 1} g (t) \, dt , \nonumber
\end{equation}
then by Mellin convolution, we have
\begin{equation}
F (s) G (s) \, = \,
 \int_0^{\infty} t^{s - 1} h (t) \, dt ,
\end{equation}
where 
\begin{equation}
h (t) \, = \,
 \int_0^{\infty} \frac{f (u) g (t / u)}{u} \, du .
\end{equation}
%\color{red}
%(MSI: what property or theorem or change of variables did you use to get 
%\begin{equation}\label{eqn:trying-to-understand-this-line}
%F (s) G (s) \, = \,
% \int_0^{\infty} t^{s - 1} \int_0^{\infty} \frac{f (u) g (t / u)}{u} \, du \, dt  \:\: ?)
%\end{equation}
%\color{black}
%\color{BurntOrange}
%It is straightforward computation, like e.g. in \url{https://math.stackexchange.com/questions/1254419/mellin-convolution-and-mellin-transform}.
%\color{black}

In particular, 
\begin{equation}\label{eqn:Ramanujan-proof-end}
\frac{\pi}{\sin \pi s} F (- s) \, = \,
 \int_0^{\infty} t^{s - 1} \varphi (t) \, dt ,  \quad \mbox{ where }
 \varphi (t) \, = \,
 \int_0^{\infty} \frac{f (1/u)}{u + t} \, du .
\end{equation}
We have $ F (- s) $ in place of $ F (s) $, which translates from $  f (u) $ to $ f (1/u) $: 
\[ 
F(-s) = \int_0^{\infty} u^{s-1} f(1/u)\, du.  
\]  
Also, 
$$ G (s) = \frac{\pi}{\sin \pi s} = \int_0^{\infty} \frac{t^{s - 1}}{1 + t}\, dt ,
$$
so the convolution of $F(-s)$ with $ \pi / \sin \pi s $ is
\begin{equation}\label{}
\frac{\pi}{\sin \pi s} F (- s) \, = \,
 \int_0^{\infty} t^{s - 1} \int_0^{\infty} \frac{f (1/u)}{1 + t/u} \, \frac{du}{u} \, dt = \,
 \int_0^{\infty} t^{s - 1} \int_0^{\infty} \frac{f (1/u)}{u + t} \, du 
 \, dt.
\end{equation}

Replacing $ (u + t)^{-1} $ by its Taylor series, we obtain 
\begin{align*}
 \int_0^{\infty} \frac{f (1/u)}{u + t} \, du 
 &=  \int_0^{\infty}  \frac{u^{-1} f (1/u)}{1 + (t/u)} \, du \\
 &=  \int_0^{\infty} \frac{f(1/u)}{u} \sum_{n\geq 0} (-t/u)^n du \\ 
 &= \sum_{n\geq 0}  (-t)^n \int_0^{\infty} u^{-n-1} f(1/u)\, du.  
\end{align*}
Now, letting $v=1/u$ (and so $dv = -u^{-2}du$),  we see that 
\begin{align*} 
\int_0^{\infty} u^{-n-1} f(1/u)\, du &= \int_0^{\infty} v^{n+1}f(v)\frac{dv}{v^2}  
= \int_0^{\infty} v^{n-1}f(v)\, dv  
= F(n). 
\end{align*} 
\end{proof}

\bibliography{super-radon}

%\begin{comment}
%	\begin{thebibliography}{9}
%%		
%		\bibitem{G}
%		Gelfand  I.M.,
%		\emph{General  theory  of  hypergeometric  functions}.
%		Soviet Math. Doklady,
%		vol. 33 (1986), 573--577.
%%
%		\bibitem{V}
%		Theodore Voronov,
%		\emph{Volumes of classical supermanifolds},
%		Sbornik: Mathematics 207 (11) (2016), 1512-1536.
%%
%%
%%	
%		%
%		%https://en.wikipedia.org/wiki/Verma_module
%		%		
%	\end{thebibliography}
%\end{comment}	

\end{document}